\documentclass[11pt]{article}

\usepackage[T1]{fontenc}
\usepackage{lmodern}
\usepackage{microtype}
\usepackage[margin=1in]{geometry}
\usepackage{amsmath,amssymb,amsthm,mathtools}
\usepackage{booktabs,tabularx,array}
\usepackage{float}
\usepackage{enumitem}
\usepackage{xcolor}
\usepackage{url}
\usepackage{natbib}
\usepackage[hidelinks]{hyperref}
\usepackage[nameinlink,noabbrev]{cleveref}

\newcommand{\paperauthor}{Jack Fitzsimons}
\newcommand{\paperaffiliation}{Oblivious}
\newcommand{\paperemail}{jack@oblivious.com}

\newcommand{\paperkeywords}{differential privacy, zero-concentrated differential privacy, group-wise privacy, relative error, information-theoretic lower bounds}

\hypersetup{
  pdftitle={Better Privacy Guarantees for Larger Groups},
  pdfauthor={\paperauthor},
  pdfsubject={Privacy guarantees for group counts with relative expected error},
  pdfkeywords={\paperkeywords}
}

\definecolor{softgray}{gray}{0.95}
\definecolor{midgray}{gray}{0.35}

\newcolumntype{Y}{>{\raggedright\arraybackslash}X}

\newcolumntype{L}[1]{>{\raggedright\arraybackslash}p{#1}}

\newcommand{\N}{\mathcal{N}}
\newcommand{\E}{\mathbb{E}}
\newcommand{\Var}{\operatorname{Var}}
\newcommand{\KL}{D_{\mathrm{KL}}}
\newcommand{\TV}{d_{\mathrm{TV}}}
\newcommand{\pos}[1]{\left[#1\right]_{+}}
\newcommand{\mech}{\mathcal{M}}
\newcommand{\Dalpha}{D_{\alpha}}

\newtheorem{theorem}{Theorem}[section]

\newtheorem{proposition}[theorem]{Proposition}

\newtheorem{lemma}[theorem]{Lemma}

\newtheorem{corollary}[theorem]{Corollary}
\theoremstyle{definition}

\newtheorem{definition}[theorem]{Definition}

\crefname{theorem}{theorem}{theorems}
\crefname{proposition}{proposition}{propositions}
\crefname{lemma}{lemma}{lemmas}
\crefname{corollary}{corollary}{corollaries}
\crefname{definition}{definition}{definitions}
\crefname{remark}{remark}{remarks}

\setlist[itemize]{leftmargin=1.5em,itemsep=0.3em,topsep=0.4em}
\setlist[enumerate]{leftmargin=1.7em,itemsep=0.3em,topsep=0.4em}
\title{\bfseries Better Privacy Guarantees for Larger Groups}
\author{\paperauthor\\[0.25em]\small \paperaffiliation\\[-0.1em]\small \paperemail}
\date{July 15, 2026}

\begin{document}
\maketitle

\begin{abstract}
Pujol and Desfontaines asked whether a private histogram can allow more error on larger counts and use that slack to protect members of larger groups more strongly. We study this question for fixed disjoint groups under add-or-remove-one adjacency. The privacy budget $v(n)$ depends on the affected count, is nonincreasing, and must bound both R\'enyi-divergence directions at every order. This is the count-dependent form of zero-concentrated differential privacy (zCDP) studied here. The original strict relative-error condition is impossible at count zero. We therefore make the boundary tolerance explicit by requiring $\E\lvert\widehat{x}_i-x_i\rvert<r\max\{x_i,1\}$, without changing the requirement at any positive count.

Our main result determines the best dependence on group size. For the upper bound, we directly specialize an existing shifted-transformation framework. The resulting shifted-log Gaussian mechanism has a certified budget $v(n)=O_r(n^{-2})$. Conversely, for every fixed $0<r<1$, any mechanism satisfying the same positive-count utility requirement and count-dependent zCDP must have $v(n)=\Omega_r(n^{-2})$. Thus the inverse-square rate is optimal under the repaired formulation. A many-count information argument further places the leading coefficient in the large-count-then-small-error limit between $\pi/(4e^2)$ and $1/\pi$, a factor below three. At $r=1$, a data-independent release meets the repaired criterion with zero privacy loss.
\end{abstract}

\section{The problem, its boundary, and the main result}

Suppose a dataset is split into fixed, disjoint groups, and let $x_i$ be the number of records in group $i$. A standard private histogram adds noise of roughly the same size to every count. That gives every person roughly the same privacy guarantee, but it gives a large group much smaller \emph{relative} error than a small group.

\citet{PujolDesfontaines2023} asked whether one can spend this surplus accuracy differently: allow the error in group $i$ to grow in proportion to $x_i$, and use the extra noise to protect people in larger groups more strongly. Their definition asks for a privacy budget that depends only on the group count and does not increase as the count grows.

The result has two halves: a construction with an inverse-square privacy budget, and a construction-independent proof that no mechanism satisfying the same requirements can improve that rate.

\begin{theorem}[Optimal group-size rate]
\label[theorem]{thm:summary}
For every fixed $0<r<1$, the repaired formulation in \eqref{eq:repaired-utility} and \cref{def:groupzcdp} admits a mechanism with a finite nonincreasing budget
\[
  v(n)=\Theta_r(n^{-2}).
\]
Conversely, every mechanism satisfying the positive-count utility requirement and count-dependent group-wise zCDP with a finite nonincreasing budget obeys
\[
  \liminf_{n\to\infty}n^2v(n)
  \ge \frac{(1-r)^6}{128r^2(1+r)^2}>0.
\]
Hence no mechanism satisfying these requirements can make the positive-count budget decay asymptotically faster than the inverse-square rate achieved by the construction. At $r=1$, the repaired formulation admits a data-independent release with $v\equiv0$.
\end{theorem}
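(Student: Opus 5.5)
The theorem has four parts: an upper bound $v(n)=O_r(n^{-2})$, a lower bound with the explicit constant, matching of the two rates, and the $r=1$ case. I would prove them in that order. Only the lower bound needs real work; the rest reduces to the shifted-transformation framework already cited and a short check at the boundary.

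\emph{Upper bound.} Release $\widehat{x}_i = \exp(\log(x_i+c) + \sigma Z_i) - c$ with $Z_i$ i.i.d.\ standard normal, and choose the shift $c>0$ and the scale $\sigma=\sigma(r)$ below.
\begin{itemize}
\item \emph{Utility.} We have $\E\lvert\widehat{x}_i-x_i\rvert=(x_i+c)\,\E\lvert e^{\sigma Z}-1\rvert$. Choose $\sigma$ so that $\E\lvert e^{\sigma Z}-1\rvert$ is a small constant multiple of $r$. Then choose $c$ so that the bound $(x+c)\,\E\lvert e^{\sigma Z}-1\rvert<r\max\{x,1\}$ holds at $x=0$, at $x=1$, and hence for every $x\ge 1$.
\item \emph{Privacy.} Adding or removing one record in group $i$ moves the Gaussian mean from $\log(n+c)$ to $\log(n+1+c)$. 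Rényi divergence is invariant under the bijection $y\mapsto e^{y}-c$. Between two Gaussians with common variance $\sigma^2$ it equals $\alpha\Delta^2/(2\sigma^2)$ at every order, symmetrically in both directions.
\item \emph{Budget.} So $v(n)=\log^2\!\big(1+\tfrac{1}{n+c}\big)/(2\sigma^2)$ for adjacent counts $n,n+1$. This is nonincreasing, finite, and $\le 1/(2\sigma^2(n+c)^2)=O_r(n^{-2})$.
\end{itemize}

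\emph{Lower bound.} This is the main obstacle. I would use a Le Cam-style testing argument between count $n$ and count $n+k$ with $k\approx \rho n$, where $\rho$ is chosen so that the two utility balls are separated.
\begin{itemize}
\item \emph{Testing.} The utility bound and Markov's inequality give $\Pr\big(\lvert\widehat{x}-n\rvert<\tfrac{1+r}{2}\cdot\ldots\big)$-type guarantees. Separation then yields a test between $P_n$ and $P_{n+k}$ with constant advantage depending on $1-r$. Concretely, thresholding at the midpoint gives $\TV(P_n,P_{n+k})\ge 1-\frac{2r(\ldots)}{(1-r)\ldots}$. I would tune $k=\lceil \tfrac{(1-r)}{?}n\rceil$ so the advantage is of order $(1-r)$.
\item \emph{Chaining.} Bound $\TV$ by Pinsker, $\TV\le\sqrt{\KL/2}$. zCDP at order $\alpha\to1$ gives $\KL(P_m\|P_{m+1})\le v(m)$.
\item \emph{Composing along the chain.} The naive additive chain gives $\KL\lesssim k^2\,\max v$ only for Gaussians. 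In general I would use the fact that zCDP composes like group privacy: $\rho$-zCDP for neighbours gives $k^2\rho$-zCDP at distance $k$. Applied along the chain from $n$ to $n+k$ this gives $\KL(P_n\|P_{n+k})\le (\sum_{j<k}\sqrt{v(n+j)})^2\le k^2 v(n)$, using monotonicity of $v$.
\item \emph{Conclusion.} Combining, $(1-r)^{a}\lesssim k^2 v(n)\sim\rho^2 n^2 v(n)$. This yields $n^2v(n)\ge C(1-r)^6/(r^2(1+r)^2)$ after bookkeeping. The $(1+r)$ comes from the spread $n+k$ against $n$, the $r$ from Markov, and the powers of $(1-r)$ from the margin, the choice of $\rho$ and the squaring in Pinsker.
\end{itemize}
Getting exactly $128$ is a matter of tracking these constants. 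The delicate points are the triangle-type inequality for $\sqrt{\text{zCDP}}$ along the chain, and handling integer rounding of $k$ as $n\to\infty$, which is why the statement concerns the $\liminf$.

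\emph{Rate and $r=1$.} The two bounds together give $v(n)=\Theta_r(n^{-2})$. At $r=1$, release $\widehat{x}_i\equiv 0$. Then $\E\lvert 0-x\rvert=x<\max\{x,1\}$ fails at $x\ge1$ with equality, so instead release a data-independent random variable, such as uniform on a suitable set or a fixed constant like $1/2$.
\begin{itemize}
\item At $x=0$ the error is $1/2<1$.
\item At $x\ge1$ a constant $a$ fails for large $x$ unless it is replaced by a heavy-tailed or scale-mixture distribution. I would check a density $f$ with $\E\lvert Y-x\rvert<\max\{x,1\}$ for all $x$, for example $Y$ that is $0$ with probability $p$ and otherwise $1$. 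This gives $p\cdot x+(1-p)(x-1)<x$ for $x\ge1$, and $1-p<1$ at $x=0$, so any $p\in(0,1)$ works.
\end{itemize}
This release is independent of the data, so $v\equiv0$.
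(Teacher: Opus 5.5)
\textbf{The lower bound has a genuine gap.} Your upper bound is sound up to one indexing slip noted below. Your chaining step is also sound. Your lower-bound template matches the paper's: a Markov threshold test, quadratic zCDP group privacy along the monotone chain, the limit $\alpha\downarrow1$, then Pinsker.

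The gap is the testing step. You leave it as placeholders, and the one concrete choice you do make fails. Take counts $n$ and $n+k$ and the arithmetic midpoint $t=n+k/2$. Markov gives $P_n(Y>t)\le 2rn/k$, but $P_{n+k}(Y\le t)\le 2r(n+k)/k$. Hence
\[
  \TV(P_n,P_{n+k})\ge 1-2r-\frac{4rn}{k}.
\]
The tolerance $r(n+k)$ at the larger count is comparable to the gap $k/2$. So this bound is vacuous for every $k$ once $r\ge1/2$.

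Even for $r<1/2$, the best this test can give is $(1-2r)^4/(128r^2)$, attained at $k=8rn/(1-2r)$. That is strictly below the claimed constant, because $(1-r)^3-(1-2r)^2(1+r)=r^2(3-5r)>0$. Your tuning $k\propto(1-r)n$ also points the wrong way: the offset must grow like $(1-r)^{-2}n$ as $r\to1$. So the proposal does not establish the converse for all $0<r<1$, and the explicit constant is asserted rather than derived.

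\textbf{The missing idea.} Put the threshold at the geometric mean $t=\sqrt{nm}$, which is the midpoint in log space. Use a fixed ratio $m=\lceil C_rn\rceil$ with $C_r=((1+3r)/(1-r))^2$, and write $q=\sqrt{m/n}$.
\begin{enumerate}
  \item The Markov sum $rn/(t-n)+rm/(m-t)$ equals $r(q+1)/(q-1)$. This choice of $t$ minimizes the sum for fixed $m/n$.
  \item The condition $q\ge(1+3r)/(1-r)$ makes the sum at most $(1+r)/2$, so $\TV(P_n,P_m)\ge(1-r)/2$.
  \item Combining this with $\KL(P_n\|P_m)\le(m-n)^2v(n)$ and Pinsker gives $v(n)\ge(1-r)^2/(2(m-n)^2)$.
  \item Since $C_r-1=8r(1+r)/(1-r)^2$, this yields exactly $(1-r)^6/(128r^2(1+r)^2)$.
\end{enumerate}

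\textbf{Two smaller points.}
\begin{enumerate}
  \item \Cref{def:groupzcdp} charges the edge $n\leftrightarrow n+1$ to $v(n+1)$, its larger endpoint. Your budget must therefore be shifted to $v(n)=\log^2\bigl(1+1/(n-1+c)\bigr)/(2\sigma^2)$ for $n\ge1$. As written, the actual divergence exceeds the allowed one on every edge.
  \item At $r=1$, the constant release $1/2$ does not fail for large $x$. Its error is $x-1/2<x$ for $x\ge1$, and $1/2<1$ at $x=0$. Your Bernoulli release is also valid, but that detour is unnecessary.
\end{enumerate}
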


\begin{proof}
For the upper bound, take, for example, $r_0=r/2$ and $c=1$ in \cref{thm:main}. The converse is \cref{prop:two-count-lower}, and the endpoint is \cref{prop:r-one}.
\end{proof}

For the feasibility half, the key change of coordinates is simple.

\begin{center}
\fcolorbox{black}{softgray}{%
\begin{minipage}{0.91\textwidth}
\textbf{Upper-bound intuition.}
A one-person change is additive in count space but becomes smaller in log space:
\[
  \log(n+1+c)-\log(n+c)
  =\log\!\left(1+\frac{1}{n+c}\right)
  \asymp \frac{1}{n}.
\]
Add fixed-variance Gaussian noise in log space. Gaussian zCDP is quadratic in the change of the mean, so the privacy budget is of order $1/n^2$.
\end{minipage}}
\end{center}

The converse starts from the opposite observation. Accuracy makes nearby counts distinguishable, while privacy limits how distinguishable they can be. A two-count test determines the exponent. Using many counts at once gives the sharper constant.

\subsection{Contributions and relation to prior work}

\Cref{tab:prior} separates the inherited ingredients from the arguments developed here. The lower bounds do not assume the upper construction, and we are not aware of an earlier lower bound for this formulation.

\begin{table}[t]
\centering
\footnotesize
\begin{tabularx}{\textwidth}{@{}L{0.225\textwidth}Y Y@{}}
\toprule
\textbf{Ingredient} & \textbf{Source} & \textbf{Role in the solution} \\
\midrule
\textbf{Problem and privacy target}
& \citet{PujolDesfontaines2023} define group-wise zCDP and ask for a count-only, nonincreasing privacy budget under relative expected error.
& We keep their bidirectional, all-orders definition, prove that the literal zero-count utility condition is impossible, and state a zero-tolerant repair. \\
\addlinespace[0.35em]
\cmidrule(lr){1-3}
\addlinespace[0.25em]
\textbf{Relative and multiplicative error}
& \citet{XiaoEtAl2011} study relative-error-aware mechanisms; \citet{LeNyPappas2013} study multiplicative log-Laplace and lognormal mechanisms.
& These works motivate measuring error relatively and moving from count space to log space. \\
\addlinespace[0.35em]
\cmidrule(lr){1-3}
\addlinespace[0.25em]
\textbf{Shifted transformation mechanism}
& \citet{FinleyEtAl2026} add Gaussian noise after a concave transform, including $f(x)=\log x$ with a positive offset, building on the PRzCDP framework of \citet{SeemanEtAl2024}.
& We specialize that template to a count, choose the inverse drift for expected absolute error, and interpret the exact adjacent gap already present in their analysis as the count-local budget required here. \\
\addlinespace[0.35em]
\cmidrule(lr){1-3}
\addlinespace[0.25em]
\textbf{The $n^{-2}$ scale}
& \citet{Azize2023} studies $\N(n,r^2n^2)$ and, in a one-sided calculation, exposes an $n^{-2}$ term for this larger-groups question.
& We identify the reverse high-order divergence obstruction and use an equal-variance log representation to obtain finite zCDP in both directions at every order. \\
\addlinespace[0.35em]
\cmidrule(lr){1-3}
\addlinespace[0.25em]
\textbf{Optimality}
& \citet{BunSteinke2016} give the Gaussian zCDP calculation and the quadratic group-privacy rule.
& For every fixed $0<r<1$, we combine these tools with two-count and many-count testing arguments to prove that no mechanism satisfying the positive-count formulation can improve the $n^{-2}$ exponent. \\
\bottomrule
\end{tabularx}
\caption{Prior results and new arguments used to establish feasibility and, for every fixed $0<r<1$, the optimal positive-count rate for the repaired formulation.}
\label{tab:prior}
\end{table}

The closest upper-bound ingredient is \citet{FinleyEtAl2026}. Their template computes $f(q(D)+a)$, adds fixed-variance Gaussian noise, and applies an estimator $g$. With
\[
  q(D)=n,\qquad a=c,\qquad f(x)=\log x,
  \qquad g(y)=e^{y-\sigma^2}-c
\]
it gives the unprojected mechanism studied here. Finley et al.'s mean-unbiased estimator uses the familiar drift $-\sigma^2/2$. We use $-\sigma^2$ because it minimizes expected absolute multiplicative error.

Their analysis already contains the exact gap between adjacent transformed queries. Their stated PRzCDP policy replaces this gap by a worst case over the baseline query value. For a unit count record, that worst case is the boundary cost $\log^2((1+c)/c)/(2\sigma^2)$. Here the budget is meant to vary with the current count, so we keep the exact adjacent gap. This gives \eqref{eq:vdef}.

Related work allows privacy guarantees to depend on an individual's data or on a public policy function \citep{LuiPass2015,SeemanEtAl2024}. The relative Gaussian mechanism gives R\'enyi-DP bounds for output-dependent Gaussian variance under relative sensitivity in a broader setting \citep{HendrikxEtAl2024}. These works provide broader forms of data-dependent privacy, but they do not by themselves give the count-local feasibility and mechanism-independent converse proved here.

Relative error also appears in private hierarchical counting. \citet{BiswasEtAl2024} exploit tree structure to estimate hierarchical heavy hitters under a single global differential-privacy guarantee. That setting is complementary: the groups overlap along a hierarchy, whereas the present question concerns fixed disjoint groups and asks for the privacy budget itself to improve with the affected count.

\subsection{Scope}

The upper and lower bounds match in asymptotic order. They do not identify a pointwise-optimal budget or the exact small-error coefficient. The lower bound uses utility only at positive counts, so the choice made at zero does not affect the optimal exponent. We also follow the alternative path suggested by Pujol and Desfontaines: this paper constructs a different mechanism rather than analyzing their randomized Algorithm~1.

\section{The formal problem and the necessary zero-count repair}
\label{sec:problem}

Let a data universe $U$ be partitioned into fixed disjoint groups $U_1,\ldots,U_k$. A dataset $D$ is a finite multiset of records. Write
\[
  x(D)=(x_1(D),\ldots,x_k(D)),
\]
where $x_i(D)$ is the number of records from $U_i$. Two datasets are neighbors if one is obtained from the other by adding or removing one record.

Each neighboring edge joins two consecutive counts. We index the edge by its larger count and use one budget to bound both R\'enyi-divergence directions.

\begin{definition}[Count-dependent group-wise zCDP]
\label[definition]{def:groupzcdp}
Let $v:\mathbb{Z}_{\ge 0}\to\mathbb{R}_{\ge 0}$ be nonincreasing. For neighboring datasets $D,D'$ that differ by one record from group $i$, define
\[
  n_i(D,D')=\max\{x_i(D),x_i(D')\}.
\]
A mechanism $\mech$ satisfies count-dependent $v$-group-wise zCDP if, for every such pair and every $\alpha>1$,
\[
  \max\!\left\{
    \Dalpha\bigl(\mech(D)\,\|\,\mech(D')\bigr),
    \Dalpha\bigl(\mech(D')\,\|\,\mech(D)\bigr)
  \right\}
  \le \alpha v\bigl(n_i(D,D')\bigr).
\]
\end{definition}

This is equivalent to applying the ordered-pair condition of \citet{PujolDesfontaines2023} to both orderings of every neighboring pair. On an edge joining $n$ and $n+1$, the two orderings invoke $v(n)$ and $v(n+1)$; both divergences must therefore obey
\[
  \alpha\min\{v(n),v(n+1)\}=\alpha v(n+1),
\]
because $v$ is nonincreasing. Thus \cref{def:groupzcdp} states exactly the original two ordered-pair requirements, not a stronger substitute.

The adjective ``group-wise'' follows the terminology of the open problem and refers to histogram cells. Later, ``group privacy'' refers to the standard chaining guarantee for changing several records; the two uses are distinct. Throughout the paper, all logarithms are natural.

The utility requirement in the open problem is
\begin{equation}
  \E\lvert \widehat{x}_i-x_i\rvert < r x_i
  \qquad\text{for every group }i.
  \label{eq:literal-utility}
\end{equation}
At $x_i=0$, the right-hand side is zero.

\begin{proposition}[Why the literal statement cannot include zero]
\label[proposition]{prop:zero}
No random variable can satisfy \eqref{eq:literal-utility} when $x_i=0$. Moreover, for $0<r<1$, replacing the strict inequality by a weak one does not repair the problem if one also asks for finite two-sided R\'enyi divergence across the edge $0\leftrightarrow 1$.
\end{proposition}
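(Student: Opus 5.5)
The first claim needs only nonnegativity. When $x_i=0$, the requirement reads $\E\lvert\widehat{x}_i\rvert<0$. But $\lvert\widehat{x}_i\rvert\ge 0$ almost surely, so its expectation is nonnegative, and the strict inequality fails for every random variable $\widehat{x}_i$ (including when the expectation is infinite).

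For the second claim, take the weak version $\E\lvert\widehat{x}_i-x_i\rvert\le r x_i$. At $x_i=0$ this forces $\E\lvert\widehat{x}_i\rvert=0$, so $\widehat{x}_i=0$ almost surely. To use this, I will pass to the marginal of the released estimate for group $i$. Fix two neighboring datasets $D$ and $D'$ that differ by one record of group $i$, with $x_i(D)=0$ and $x_i(D')=1$. Post-processing (data processing for R\'enyi divergence) gives
\[
  \Dalpha\bigl(\mech(D')\,\|\,\mech(D)\bigr)\ge \Dalpha\bigl(P_1\,\|\,P_0\bigr),
\]
where $P_0$ and $P_1$ denote the laws of $\widehat{x}_i$ under $D$ and $D'$. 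Under $D$ the estimate is the point mass $P_0=\delta_0$. At count one, the weak utility bound gives $\E\lvert\widehat{x}_i-1\rvert\le r<1$.

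The key step is to show $P_1\neq\delta_0$. If $\widehat{x}_i=0$ almost surely under $D'$, then $\E\lvert\widehat{x}_i-1\rvert=1>r$, contradicting the utility bound. Hence $P_1(\{\widehat{x}_i\neq 0\})>0$, so $P_1$ is not absolutely continuous with respect to $\delta_0$. This gives
\[
  \Dalpha(P_1\,\|\,P_0)=\infty \quad\text{for every }\alpha>1,
\]
which contradicts \cref{def:groupzcdp} for any finite $v(1)$. More generally it rules out any finite R\'enyi divergence in the direction $D'\to D$.

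The one point needing care is the meaning of ``two-sided finite.'' Only one direction is actually infinite: $\Dalpha(\delta_0\,\|\,P_1)=-\log P_1(\{0\})$ can be finite. So the contradiction has to come from the ordering $D'\to D$. This direction is required both by the bidirectional definition and by the ordered-pair formulation. Everything else is routine.
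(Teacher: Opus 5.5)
Your proof is correct and is essentially the paper's argument. Weak utility at zero forces $P_0=\delta_0$, data processing reduces to the group-$i$ marginal, and utility at count one rules out $P_1=\delta_0$, so $P_1\not\ll P_0$ and $\Dalpha(P_1\|P_0)=\infty$; the paper runs this last step in contrapositive form. Your remark that only the direction from count one to count zero must blow up, since $\Dalpha(\delta_0\|P_1)=-\log P_1(\{0\})$ can be finite, is accurate and a useful clarification.
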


\begin{proof}
At count zero, \eqref{eq:literal-utility} asks for $\E|\widehat{x}_i|<0$, which is impossible.

Now replace $<$ by $\le$. The condition $\E|\widehat{x}_i|\le 0$ forces the affected released coordinate at count zero to equal zero almost surely. Call this marginal law $P_0=\delta_0$. By data processing, finite R\'enyi divergence of the full output laws implies finite divergence of these coordinate marginals. If $\Dalpha(P_1\|P_0)$ is finite for any $\alpha>1$, then $P_1$ must be absolutely continuous with respect to $P_0$, and hence $P_1=\delta_0$ as well. Its expected error at count one is therefore $1$, which is larger than $r$ when $r<1$.
\end{proof}

There are two natural repairs. One may ask only about positive counts, or one may allow a fixed absolute tolerance at zero. We use the second form:
\begin{equation}
  \E\lvert \widehat{x}_i-x_i\rvert
  < r\max\{x_i,1\}.
  \label{eq:repaired-utility}
\end{equation}
For every nonempty group, this is exactly the original relative-error target. The value at zero merely makes the target compatible with privacy across the first adjacency edge.

\begin{proposition}[The trivial endpoint]
\label[proposition]{prop:r-one}
When $r=1$, the data-independent real-valued release $\widehat{x}_i=1/2$ satisfies \eqref{eq:repaired-utility} strictly at every count $n\ge0$ and has $v\equiv0$. Hence $0<r<1$ is the nontrivial regime.
\end{proposition}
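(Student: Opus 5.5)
The plan is to verify the utility inequality directly at every count and then read off privacy from data-independence.

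\textbf{Utility.} The release $\widehat{x}_i=1/2$ is deterministic, so $\E\lvert\widehat{x}_i-n\rvert=\lvert 1/2-n\rvert$ at count $n$. Split into two cases.
\begin{itemize}
\item At $n=0$, this equals $1/2<1=\max\{0,1\}$.
\item At $n\ge1$, it equals $n-1/2<n=\max\{n,1\}$.
\end{itemize}
So \eqref{eq:repaired-utility} holds strictly with $r=1$ at every count.

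\textbf{Privacy.} Because the release ignores $D$, the output law $\mech(D)$ is the same point mass for every dataset. Hence for any neighboring pair and every $\alpha>1$, both $\Dalpha(\mech(D)\|\mech(D'))$ and $\Dalpha(\mech(D')\|\mech(D))$ are divergences between identical laws, and so equal $0$. The constant function $v\equiv0$ is nonincreasing and satisfies \cref{def:groupzcdp} with equality.

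If the full histogram is released, each coordinate is the constant $1/2$, so the joint law is still data-independent and the same argument applies.

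\textbf{Conclusion.} At $r=1$ the problem is solved with zero privacy loss, so only $0<r<1$ is nontrivial. This is also consistent with \cref{prop:zero}, whose obstruction needs $r<1$. No step here is a real obstacle. The only point requiring care is the $n=0$ case: the choice $1/2$ is precisely what makes both the zero and the positive-count inequalities strict at once.
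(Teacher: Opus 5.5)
Your proof is correct and is essentially the paper's argument: the same case split at $n=0$ (error $1/2<1$) and $n\ge1$ (error $n-1/2<n$), followed by the observation that a data-independent output law makes every R\'enyi divergence vanish, so $v\equiv0$ works. One small overstatement: $1/2$ is not special, since any constant in $(0,1)$ makes both the zero-count and positive-count inequalities strict.
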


\begin{proof}
At $n=0$, the absolute error is $1/2<1$, while at every $n\ge1$ it is $n-1/2<n$. Because the output law does not depend on the dataset, every R\'enyi divergence is zero.
\end{proof}

\section{Feasibility via a count-local shifted transformation}
\label{sec:mechanism}

Fix the requested tolerance $r\in(0,1]$. Choose two design parameters:
\[
  0<r_0<r
  \qquad\text{and}\qquad
  c>0.
\]
Choose $r_0<r$ so that the final utility bound is strict; $r_0$ may be arbitrarily close to $r$. Let $\Phi$ be the standard normal cumulative distribution function, and set
\begin{equation}
  \sigma
  =\Phi^{-1}\!\left(
      \frac12+\frac{r_0}{2(1+c)}
    \right).
  \label{eq:sigma}
\end{equation}
The argument of $\Phi^{-1}$ lies strictly between $1/2$ and $1$, so $\sigma$ is positive and finite.

For each group $i$, independently sample $Z_i\sim\N(0,1)$ and release
\begin{equation}
  \widehat{x}_i
  =\pos{
      (x_i+c)\exp(\sigma Z_i-\sigma^2)-c
    }.
  \label{eq:mechanism}
\end{equation}
Here $[y]_+=\max\{y,0\}$.

The mechanism has three steps:
\begin{enumerate}
  \item start from the shifted log-count $\log(x_i+c)$;
  \item add Gaussian noise $\sigma Z_i$ and the deterministic drift $-\sigma^2$;
  \item exponentiate, subtract $c$, and clip at zero.
\end{enumerate}
The shift $c$ makes the logarithm well-defined at zero. The drift $-\sigma^2$ is not the familiar mean-correction $-\sigma^2/2$. It is the drift that minimizes expected \emph{absolute} multiplicative error; \cref{sec:utility} makes this precise.

\begin{theorem}[Mechanism and a closed-form privacy budget]
\label[theorem]{thm:main}
The mechanism in \eqref{eq:mechanism} has the following properties.

\begin{enumerate}[label=\textup{(\roman*)}]
  \item For every group and every count $n\ge 0$,
  \[
    \E\lvert \widehat{x}_i-n\rvert
    \le \frac{r_0(n+c)}{1+c}
    < r\max\{n,1\}.
  \]

  \item It satisfies count-dependent group-wise zCDP with
  \begin{equation}
    v(0)=v(1)
    =\frac{1}{2\sigma^2}
      \log^2\!\left(\frac{1+c}{c}\right),
    \label{eq:vzero}
  \end{equation}
  and, for $n\ge 1$,
  \begin{equation}
    v(n)
    =\frac{1}{2\sigma^2}
      \log^2\!\left(\frac{n+c}{n-1+c}\right).
    \label{eq:vdef}
  \end{equation}
  The function $v$ is nonincreasing. Definition~\ref{def:groupzcdp} never invokes $v(0)$, because every neighboring edge has maximum endpoint at least $1$; setting $v(0)=v(1)$ merely defines $v$ on all nonnegative integers while preserving monotonicity.

  \item As $n\to\infty$,
  \begin{equation}
    v(n)
    =\frac{1}{2\sigma^2n^2}+O(n^{-3}).
    \label{eq:vasymptotic}
  \end{equation}
\end{enumerate}
\end{theorem}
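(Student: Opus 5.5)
The proof handles the three parts separately, working throughout with the unclipped variable $Y=(n+c)e^{\sigma Z-\sigma^2}-c$, where $Z\sim\N(0,1)$. Privacy and utility both reduce to statements about the log-normal factor $W=e^{\sigma Z-\sigma^2}$.

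\textbf{Utility (i).} Clipping at zero is a $1$-Lipschitz map that fixes the target $n\ge 0$, so $\lvert\pos{Y}-n\rvert\le\lvert Y-n\rvert$. It therefore suffices to bound $\E\lvert Y-n\rvert=(n+c)\,\E\lvert W-1\rvert$. I would compute $\E\lvert W-1\rvert$ in closed form by splitting at the median event $\{W\ge 1\}=\{Z\ge\sigma\}$:
\[
  \E\lvert W-1\rvert=\E[W-1]+2\,\E[(1-W)\mathbf 1\{Z<\sigma\}].
\]
The Gaussian shift identity $\E[e^{\sigma Z}\mathbf 1\{Z<a\}]=e^{\sigma^2/2}\Phi(a-\sigma)$ applies to each piece. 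With drift $-\sigma^2$, the pieces should combine to
\[
  e^{-\sigma^2/2}-1+2\Phi(\sigma)-2e^{-\sigma^2/2}\Phi(0)=2\Phi(\sigma)-1 .
\]
This is also where the drift is justified: the map $m\mapsto\E\lvert e^{\sigma Z+m}-1\rvert$ is minimized when the median of the $W$-weighted law sits at $1$. By \eqref{eq:sigma}, $2\Phi(\sigma)-1=r_0/(1+c)$, which gives the first inequality. For the strict inequality: at $n=0$ the bound is $r_0c/(1+c)<r_0<r$. For $n\ge1$, $(n+c)/(1+c)\le n$ and $r_0<r$.

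\textbf{Privacy (ii).} Set $T(D)=\log(x_i+c)+\sigma Z-\sigma^2$. The released vector is a post-processing of $(T_1,\dots,T_k)$, whose coordinates are independent. Changing one record of group $i$ changes only coordinate $i$. By data processing and additivity of R\'enyi divergence over independent coordinates (the other coordinates contribute zero), both directions equal the Gaussian divergence $\alpha\Delta^2/(2\sigma^2)$ of \citet{BunSteinke2016}. Here $\Delta=\log(n+c)-\log(n-1+c)$ on the edge $\{n-1,n\}$ with $n=n_i(D,D')\ge 1$. This is symmetric and equals $\alpha v(n)$ from \eqref{eq:vdef}, and $n=1$ yields \eqref{eq:vzero}. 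Monotonicity follows because $t\mapsto\log(1+1/t)$ is decreasing in $t=n-1+c>0$; setting $v(0)=v(1)$ preserves it.

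\textbf{Asymptotics (iii).} Write $\log\frac{n+c}{n-1+c}=\log\bigl(1+\frac{1}{n-1+c}\bigr)=\frac1n+O(n^{-2})$. Squaring gives $n^{-2}+O(n^{-3})$, and dividing by $2\sigma^2$ gives \eqref{eq:vasymptotic}.

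The main obstacle is the exact evaluation of $\E\lvert W-1\rvert$. The truncated-moment bookkeeping must close exactly to $2\Phi(\sigma)-1$ for the choice \eqref{eq:sigma} to be calibrated correctly, so I would double-check the sign of each shift. Everything else is standard Gaussian zCDP.
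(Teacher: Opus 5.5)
Your proposal is correct and follows essentially the same route as the paper. You prove the exact identity $\E\lvert e^{\sigma Z-\sigma^2}-1\rvert=2\Phi(\sigma)-1$ via the Gaussian shift identity, organizing the split as $\E[W-1]$ plus twice the lower truncated piece rather than noting that the two weighted halves each equal $\tfrac12 e^{-\sigma^2/2}$, and then treat clipping as a contraction; for privacy you use the equal-variance Gaussian R\'enyi divergence in log coordinates together with post-processing and the product structure, with the same monotonicity and Taylor arguments. One small precision: after clipping, the divergences are bounded by $\alpha v(n)$ rather than equal to it, which is all \cref{def:groupzcdp} requires.
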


A one-person change near count $n$ changes the log-count by about $1/n$. Gaussian zCDP cost is quadratic in this mean shift, which yields the $1/n^2$ rate. Sections~\ref{sec:utility} and~\ref{sec:privacy} verify the utility and privacy claims.

\begin{corollary}[A simple default shift]
\label[corollary]{cor:c1}
Taking $c=1$ gives
\[
  \sigma=\Phi^{-1}\!\left(\frac12+\frac{r_0}{4}\right)
\]
and
\[
  v(0)=v(1)=\frac{\log^2 2}{2\sigma^2},
  \qquad
  v(n)=\frac{\log^2(1+1/n)}{2\sigma^2}
  \quad(n\ge 1).
\]
\end{corollary}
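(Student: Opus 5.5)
The statement is a direct specialization of \cref{thm:main} to $c=1$, so the plan is to substitute into each of its displayed formulas and simplify.

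\textbf{Noise scale.} With $c=1$, the argument of $\Phi^{-1}$ in \eqref{eq:sigma} becomes
\[
  \frac12+\frac{r_0}{2(1+1)}=\frac12+\frac{r_0}{4}.
\]
This gives the stated $\sigma$. It lies strictly between $1/2$ and $3/4$ because $0<r_0<r\le1$, so $\sigma$ is positive and finite as before.

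\textbf{Budget at the boundary.} Substituting $c=1$ into \eqref{eq:vzero} gives $\log((1+c)/c)=\log 2$. Hence
\[
  v(0)=v(1)=\frac{\log^2 2}{2\sigma^2}.
\]

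\textbf{Budget for $n\ge1$.} Substituting $c=1$ into \eqref{eq:vdef} turns the ratio $(n+c)/(n-1+c)$ into $(n+1)/n=1+1/n$. This yields
\[
  v(n)=\frac{\log^2(1+1/n)}{2\sigma^2}.
\]
As a consistency check, at $n=1$ this equals $\log^2 2/(2\sigma^2)$, which agrees with the boundary value.

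\textbf{Inherited properties.} Monotonicity of $v$, the utility bound, and the privacy guarantee all come from \cref{thm:main} without further argument. The utility bound specializes to $\E|\widehat{x}_i-n|\le r_0(n+1)/2$.

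There is no real obstacle. The only care needed is to check that $c=1$ is admissible ($c>0$) and that $r_0\in(0,r)$ keeps the $\Phi^{-1}$ argument inside $(1/2,1)$; both hold.
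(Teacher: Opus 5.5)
Your proposal is correct and takes the same approach as the paper, which treats this corollary as an immediate substitution of $c=1$ into \eqref{eq:sigma}, \eqref{eq:vzero}, and \eqref{eq:vdef} of \cref{thm:main}. Monotonicity, privacy, and utility are inherited from that theorem, as you say. Your added checks are also correct: the $\Phi^{-1}$ argument lies in $(1/2,3/4)$, the two formulas agree at $n=1$, and the utility bound specializes to $r_0(n+1)/2$.
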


The mechanism releases nonnegative real numbers. If integer counts are preferred, use stochastic rounding: for $y\ge0$, output $\lfloor y\rfloor$ or $\lceil y\rceil$ with probabilities chosen so that the conditional mean is $y$. This is postprocessing, so it cannot worsen privacy. Moreover, because the true count $n$ is an integer and $z\mapsto|z-n|$ is linear between consecutive integers, the conditional expected error after rounding is exactly $|y-n|$.

\section{Exact pre-clipping utility calibration}
\label{sec:utility}

The utility proof uses the following shifted-lognormal identity.

\begin{lemma}[Exact absolute multiplicative error]
\label[lemma]{lem:l1}
If $Z\sim\N(0,1)$ and $\sigma>0$, then
\begin{equation}
  \E\left|\exp(\sigma Z-\sigma^2)-1\right|
  =2\Phi(\sigma)-1.
  \label{eq:l1identity}
\end{equation}
\end{lemma}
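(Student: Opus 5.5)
The plan is to compute the expectation directly by splitting at the point where the exponent vanishes. The quantity $\exp(\sigma Z-\sigma^2)-1$ is nonnegative exactly when $Z\ge\sigma$. Write $W=\exp(\sigma Z-\sigma^2)$ and use $|W-1| = (W-1) + 2(1-W)\mathbf{1}\{Z<\sigma\}$. This gives
\[
  \E|W-1| = \E W - 1 + 2\,\E\bigl[(1-W)\mathbf{1}\{Z<\sigma\}\bigr].
\]
This reduces the absolute value to truncated lognormal moments, which have closed forms via $\Phi$.

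The key steps, in order, are as follows.

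\textbf{First,} compute $\E W = e^{-\sigma^2}\E e^{\sigma Z} = e^{-\sigma^2/2}$.

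\textbf{Second,} compute $\E[\mathbf{1}\{Z<\sigma\}]=\Phi(\sigma)$.

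\textbf{Third,} compute the truncated moment by completing the square:
\[
  e^{\sigma z-\sigma^2}\varphi(z)=e^{-\sigma^2/2}\varphi(z-\sigma).
\]
Hence
\[
  \E[W\mathbf{1}\{Z<\sigma\}] = e^{-\sigma^2/2}\Phi(0) = \tfrac12 e^{-\sigma^2/2}.
\]

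\textbf{Fourth,} assemble the pieces:
\[
  \E|W-1| = e^{-\sigma^2/2}-1 + 2\Phi(\sigma) - e^{-\sigma^2/2} = 2\Phi(\sigma)-1.
\]
The lognormal terms cancel exactly, and this cancellation is precisely why the drift $-\sigma^2$ was chosen.

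There is no real obstacle here. The only care needed is in the completed-square shift: it moves the truncation point from $\sigma$ to $0$, and that is what produces the factor $\Phi(0)=1/2$. As a sanity check, we can verify that the total is positive and tends to $0$ as $\sigma\to0$. To connect with the paper's remark that this drift is optimal, one could also note that $2\Phi(\sigma)-1$ equals the minimal value over drifts. The reason is that the minimizing drift sets the median of $W$ to $1$, and the median of $\sigma Z-\mu$ is $-\mu$. This observation is not needed for the lemma itself.
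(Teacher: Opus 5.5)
Your proof is correct and is essentially the paper's argument: split at $Z=\sigma$, use the completed square $e^{\sigma z-\sigma^2}\varphi(z)=e^{-\sigma^2/2}\varphi(z-\sigma)$ to get $\E[W\mathbf{1}\{Z<\sigma\}]=\tfrac12e^{-\sigma^2/2}$, and let the lognormal terms cancel; your identity $|W-1|=(W-1)+2(1-W)\mathbf{1}\{Z<\sigma\}$ is just a repackaging of the paper's two-sided split. One correction to your optional aside: the drift $\sigma^2$ does not make the median of $W$ equal to $1$ (that is drift $0$, which gives the larger error $e^{\sigma^2/2}(2\Phi(\sigma)-1)$); the first-order condition $\E[W\mathbf{1}\{W\le1\}]=\E[W\mathbf{1}\{W>1\}]$ instead puts the median of $W$ under the reweighted measure $(W/\E W)\,d\Pr$ at $1$, which is exactly the equality of the two truncated moments you computed.
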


\begin{proof}
The factor $\exp(\sigma Z-\sigma^2)$ crosses one at $Z=\sigma$. We use the Gaussian identity
\[
  e^{\sigma z-\sigma^2}\,\varphi(z)
  =e^{-\sigma^2/2}\,\varphi(z-\sigma),
\]
where $\varphi$ is the standard normal density. It gives
\[
  \E\!\left[e^{\sigma Z-\sigma^2}\mathbf{1}\{Z\le\sigma\}\right]
  =\frac12e^{-\sigma^2/2}
\]
and the same value on the event $Z>\sigma$. Splitting the absolute value at $Z=\sigma$, these two weighted terms cancel:
\begin{align*}
  \E\left|e^{\sigma Z-\sigma^2}-1\right|
  &=\E\!\left[(1-e^{\sigma Z-\sigma^2})\mathbf{1}\{Z\le\sigma\}\right]\\
  &\quad+\E\!\left[(e^{\sigma Z-\sigma^2}-1)\mathbf{1}\{Z>\sigma\}\right]\\
  &=\Phi(\sigma)-\bigl(1-\Phi(\sigma)\bigr)\\
  &=2\Phi(\sigma)-1.
\end{align*}
\end{proof}

Let
\[
  U_n=(n+c)e^{\sigma Z-\sigma^2}-c
\]
be the output before clipping. Then
\[
  U_n-n=(n+c)\left(e^{\sigma Z-\sigma^2}-1\right).
\]
By \cref{lem:l1} and the definition of $\sigma$,
\begin{align}
  \E|U_n-n|
  &=(n+c)\bigl(2\Phi(\sigma)-1\bigr)
\nonumber\\
  &=(n+c)\frac{r_0}{1+c}.
  \label{eq:utilityexact}
\end{align}
Clipping at zero cannot increase the distance to any $n\ge 0$, so \eqref{eq:utilityexact} is also an upper bound for the released value.

For $n\ge 1$,
\[
  \frac{n+c}{1+c}\le n,
\]
so the expected error is at most $r_0n<rn$. At $n=0$, it is at most $cr_0/(1+c)<r$. This proves part (i) of \cref{thm:main}.

\subsection{Why the drift is \texorpdfstring{$-\sigma^2$}{-sigma squared}}

For a fixed noise level $\sigma$, consider the more general factor $e^{\sigma Z-a}$. Dominated convergence justifies differentiating under the expectation on every compact interval of $a$. Away from a probability-zero threshold,
\begin{align*}
  \frac{d}{da}\E|e^{\sigma Z-a}-1|
  &=\E\!\left[e^{\sigma Z-a}\mathbf{1}\{Z\le a/\sigma\}\right]
    -\E\!\left[e^{\sigma Z-a}\mathbf{1}\{Z>a/\sigma\}\right]\\
  &=e^{\sigma^2/2-a}
    \left(2\Phi(a/\sigma-\sigma)-1\right).
\end{align*}
This derivative is negative for $a<\sigma^2$, zero at $a=\sigma^2$, and positive afterward. Thus $a=\sigma^2$ is the unique minimizer of expected absolute multiplicative error. By contrast, $a=\sigma^2/2$ makes the multiplicative factor have mean one. The two objectives are different because a lognormal distribution has a long right tail.

\section{Exact divergence before postprocessing}
\label{sec:privacy}

For the unprojected output $U_n$, the inequality $U_n+c>0$ holds almost surely, so
\begin{equation}
  \log(U_n+c)
  =\log(n+c)-\sigma^2+\sigma Z.
  \label{eq:lognormal}
\end{equation}
Thus, after the bijective transformation $u\mapsto\log(u+c)$, the output law is the Gaussian
\[
  \N\bigl(\log(n+c)-\sigma^2,\,\sigma^2\bigr).
\]
Because the variance is independent of $n$, every adjacent pair has finite R\'enyi divergence at all orders.

For equal-variance Gaussians, the R\'enyi divergence is exact \citep[Lemma 2.4]{BunSteinke2016}:
\begin{equation}
  \Dalpha\bigl(\N(\mu,\sigma^2)\,\|\,\N(\nu,\sigma^2)\bigr)
  =\frac{\alpha(\mu-\nu)^2}{2\sigma^2}.
  \label{eq:gaussianrdp}
\end{equation}
R\'enyi divergence is invariant under a measurable bijection and obeys data processing \citep{VanErvenHarremoes2014}. Applying \eqref{eq:gaussianrdp} to \eqref{eq:lognormal} gives, for any counts $n,m\ge 0$,
\begin{equation}
  \Dalpha(U_n\|U_m)
  =\Dalpha(U_m\|U_n)
  =\frac{\alpha}{2\sigma^2}
    \log^2\!\left(\frac{n+c}{m+c}\right).
  \label{eq:allpairrdp}
\end{equation}
Clipping and optional stochastic rounding are postprocessing, so the same expression remains a valid upper bound for the released estimate.

For the adjacent unprojected edge $n-1\leftrightarrow n$, equation \eqref{eq:allpairrdp} is exactly $\alpha v(n)$ with $v(n)$ from \eqref{eq:vdef}, in both directions. After clipping and optional rounding, data processing gives the corresponding upper bound. The boundary edge $0\leftrightarrow1$ uses $v(1)$, because its larger endpoint is $1$; the chosen extension $v(0)=v(1)$ is not invoked. Thus the released mechanism satisfies \cref{def:groupzcdp}.

The monotonicity is immediate from
\[
  \log\!\left(\frac{n+c}{n-1+c}\right)
  =\log\!\left(1+\frac{1}{n-1+c}\right),
\]
which decreases with $n$. Finally, a Taylor expansion gives
\[
  \log\!\left(1+\frac{1}{n-1+c}\right)
  =\frac{1}{n}+O(n^{-2}),
\]
and hence \eqref{eq:vasymptotic}.

For the full histogram, neighboring datasets change only one coordinate. All other coordinate laws are identical, and independent product factors with identical laws contribute zero R\'enyi divergence. Therefore the product argument proves part (ii) of \cref{thm:main}; part (iii) follows from the scalar formula \eqref{eq:vdef} and its expansion above.

\subsection{Interpreting the rate}

A zCDP guarantee $\rho$ implies
\[
  \bigl(\rho+2\sqrt{\rho\log(1/\delta)},\delta\bigr)\text{-DP}
\]
for every $\delta>0$ \citep{BunSteinke2016}. Thus, for the mechanism above, $v(n)=\Theta(n^{-2})$ corresponds, for fixed $\delta$, to an ordinary approximate-DP epsilon of order $1/n$ in the regime where the square-root term dominates.

The parameter $c$ controls a boundary tradeoff. Smaller $c$ permits a larger $\sigma$ for the same error and therefore improves the large-$n$ constant. But it makes the jump from $0$ to $1$ larger in log space, so $v(0)=v(1)$ worsens. The choice $c=1$ in \cref{cor:c1} is a simple default rather than a universal optimum.

Two limit regimes make this tradeoff precise. First fix $c>0$, let $n\to\infty$, and then let $r\downarrow0$ while $r_0/r\to1$. Then
\[
  \sigma\sim \frac{r}{1+c}\sqrt{\frac{\pi}{2}},
  \qquad
  \lim_{r\downarrow0}\lim_{n\to\infty}r^2n^2v(n)
  =\frac{(1+c)^2}{\pi},
\]
whereas the boundary cost satisfies
\[
  v(1)\sim
  \frac{(1+c)^2}{\pi r^2}
  \log^2\!\left(\frac{1+c}{c}\right).
\]
Thus $c=1$ has large-count coefficient $4/\pi$. Taking $c\downarrow0$ approaches $1/\pi$, but the first-edge budget diverges. For example, the diagonal choice $c=r$ used below has $v(1)$ of order $r^{-2}\log^2(1/r)$.

Second fix $0<r<1$, let $r_0\uparrow r$, and let $c\downarrow0$. The leading large-count coefficient approaches
\[
  \frac{1}{2\,\Phi^{-1}((1+r)/2)^2}.
\]
For small $r$,
\[
  \Phi^{-1}((1+r)/2)
  =r\sqrt{\frac{\pi}{2}}+O(r^3),
\]
so the large-count budget approaches
\begin{equation}
  v(n)\sim \frac{1}{\pi r^2n^2}.
  \label{eq:smallrconstant}
\end{equation}
This is an iterated limit: first $n\to\infty$, then $r_0\uparrow r$ and $c\downarrow0$, and finally $r\downarrow0$. The coefficient is approached within the allowed family. No fixed choice with $r_0<r$ and $c>0$ need attain it, and we do not claim that it is exactly optimal.

\section{Mechanism-independent lower bounds: two counts, then many}
\label{sec:lower}

The upper construction alone does not show whether the inverse-square rate is intrinsic. The lower bounds below show that it is. They apply to arbitrary mechanisms, including mechanisms with correlated coordinates, multiple noise stages, or output laws without densities. For a vector release, postprocessing to one coordinate preserves the utility assumption and cannot increase the privacy loss.

We state utility in this section with the weak inequality $\E_m|Y-m|\le rm$. This enlarges the admissible class, so every lower bound below also applies to the manuscript's strict positive-count criterion.

Testing two counts proves the exponent for every fixed $0<r<1$. Testing many counts also uses accuracy at the intermediate values and yields a stronger small-error constant.

\begin{center}
\fcolorbox{black}{softgray}{%
\begin{minipage}{0.91\textwidth}
\textbf{The one-line lower-bound idea.}
Choose a hidden offset $J$, so the true count is $n+J$. Accuracy says the release must reveal information about $J$. Privacy says it cannot reveal too much. Comparing the two statements forces a lower bound on $v(n)$.
\end{minipage}}
\end{center}

For a discrete random variable $J$, write $H(J)$ for its Shannon entropy.

The following pathwise lemma replaces the varying edge budgets along a fixed chain by their common upper bound.

\begin{lemma}[Pathwise zCDP group privacy]
\label[lemma]{lem:pathwise-group}
Let $D_0,\ldots,D_k$ be a fixed chain of datasets such that consecutive datasets are neighbors. Suppose that every edge of this chain satisfies $\rho$-zCDP in both directions. If $P_j$ is the output law on $D_j$, then, for every $\alpha>1$,
\[
  \max\{\Dalpha(P_0\|P_k),\Dalpha(P_k\|P_0)\}
  \le \alpha k^2\rho.
\]
\end{lemma}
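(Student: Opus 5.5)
The plan is to reuse the standard group-privacy argument of \citet{BunSteinke2016}. Its engine is a weak triangle inequality for R\'enyi divergence, applied recursively along the fixed chain. The relevant inequality is a H\"older-type bound: for distributions $P,Q,R$, orders $\alpha>1$, and conjugate exponents $p,q>1$ with $1/p+1/q=1$,
\[
  \Dalpha(P\|R)\le \frac{\alpha-1/p}{\alpha-1}\,D_{p\alpha}(P\|Q)+D_{q(\alpha-1/p)+1}(Q\|R)\cdot\frac{\ldots}{\ldots}.
\]
Rather than guess constants, I will cite the statement \citet[Proposition~27]{BunSteinke2016}. It says that if $P\|Q$ satisfies $\Dalpha\le\alpha\rho_1$ and $Q\|R$ satisfies $\Dalpha\le\alpha\rho_2$ for all $\alpha>1$, then $P\|R$ satisfies $\Dalpha\le\alpha(\sqrt{\rho_1}+\sqrt{\rho_2})^2$ for all $\alpha>1$. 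That proposition is proved purely in terms of three distributions and these all-order divergence bounds. It is therefore pathwise: it needs no global mechanism-level zCDP assumption, only the bounds on the specific pairs involved. I will verify this by checking that their proof uses only H\"older's inequality on densities, with the standard conventions when densities are absent: work with densities against $P+Q+R$, and let infinite divergences propagate trivially.

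\textbf{Step 1 (induction).} I will prove by induction on $j$ that $\Dalpha(P_0\|P_j)\le\alpha j^2\rho$ for all $\alpha>1$. The case $j=1$ is the edge hypothesis. For the inductive step, apply the three-distribution bound to $P_0,P_{j-1},P_j$ with $\rho_1=(j-1)^2\rho$ from the induction hypothesis and $\rho_2=\rho$ from the edge $D_{j-1}\leftrightarrow D_j$. This gives $\alpha\bigl((j-1)\sqrt\rho+\sqrt\rho\bigr)^2=\alpha j^2\rho$.

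\textbf{Step 2 (reverse direction).} I will apply the same induction to the reversed chain $D_k,\ldots,D_0$. This is legitimate because each edge is assumed to satisfy the bound in both directions, and it yields $\Dalpha(P_k\|P_0)\le\alpha k^2\rho$. Taking the maximum of the two directions gives the claim.

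The main obstacle is confirming that the cited triangle inequality does not require, for the intermediate comparison, orders other than those available. The key point is that the $(\sqrt{\rho_1}+\sqrt{\rho_2})^2$ form needs the hypotheses at all orders, and it optimizes $p$ separately for each $\alpha$. The induction hypothesis is stated for every $\alpha>1$, so this is fine.

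If needed, I would reprove the inequality directly. Write
\[
  e^{(\alpha-1)\Dalpha(P\|R)}=\E_Q\bigl[(P/Q)^\alpha(Q/R)^{\alpha-1}\bigr],
\]
apply H\"older with exponents $p,q$, express each factor as an exponentiated R\'enyi divergence, and then substitute the linear-in-order bounds. Minimizing over $p$ yields the square-root sum.
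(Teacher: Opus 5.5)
Your proposal is correct and is essentially the paper's own argument: the paper cites the Bun--Steinke group-privacy proposition applied to the fixed chain and notes that its proof uses only consecutive laws. That proof is exactly the triangle-like-inequality induction you unpack, run forward along the chain and then along the reversed chain using the other edge direction. The garbled H\"older display does no harm since you do not rely on it; its correct form is $D_\alpha(P\|R)\le\frac{\alpha-1/p}{\alpha-1}D_{p\alpha}(P\|Q)+D_{1+q(\alpha-1)}(Q\|R)$, and optimizing over $p$ gives your $(\sqrt{\rho_1}+\sqrt{\rho_2})^2$.
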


\begin{proof}
This is the standard quadratic group-privacy theorem for zCDP \citep[Proposition 1.9]{BunSteinke2016}, applied to the fixed chain $D_0,\ldots,D_k$. Its proof uses only the zCDP inequalities for consecutive laws on that chain, so no global uniform bound outside the path is required.
\end{proof}

\subsection{A two-count rate bound}

The two-count argument proves the exponent for every $0<r<1$. The later many-count argument improves the constant as $r\downarrow0$.

\begin{proposition}[Rate lower bound for $0<r<1$]
\label[proposition]{prop:two-count-lower}
Fix $0<r<1$. Write $\E_m$ for expectation when the affected-group count is $m$. Suppose a mechanism produces an estimate $Y$ with
\[
  \E_m|Y-m|\le rm
  \qquad\text{for every positive count }m,
\]
and satisfies count-dependent group-wise zCDP for a finite, nonincreasing function $v$. Define
\[
  C_r=\left(\frac{1+3r}{1-r}\right)^2.
\]
Then, for every $n\ge1$,
\begin{equation}
  v(n)
  \ge
  \frac{(1-r)^2}
       {2\bigl(\lceil C_rn\rceil-n\bigr)^2}.
  \label{eq:two-count-finite}
\end{equation}
Consequently,
\begin{equation}
  \liminf_{n\to\infty}n^2v(n)
  \ge
  \frac{(1-r)^6}{128r^2(1+r)^2},
  \label{eq:two-count-asymptotic}
\end{equation}
so $v(n)=\Omega_r(n^{-2})$.
\end{proposition}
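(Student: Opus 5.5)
The plan is a two-point testing argument. We compare the affected count $n$ with a much larger count $N=\lceil C_r n\rceil$. Accuracy forces the two output laws to be far apart in total variation. Privacy, chained along the path $n\to n+1\to\cdots\to N$, forces them to be close in KL divergence. By postprocessing we may assume the release is the scalar $Y$. Fix datasets $D_n,D_{n+1},\ldots,D_N$ that differ only by adding records to the affected group, one at a time. Each edge $m-1\leftrightarrow m$ with $m\ge n+1$ satisfies both directions with budget $v(m)\le v(n+1)\le v(n)$, because $v$ is nonincreasing. Hence every edge of the chain is $v(n)$-zCDP in both directions. \Cref{lem:pathwise-group} with $k=N-n$ then gives $\Dalpha(P_N\|P_n)\le \alpha k^2 v(n)$ for every $\alpha>1$. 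Letting $\alpha\downarrow1$ gives $\KL(P_N\|P_n)\le k^2v(n)$, using monotonicity of R\'enyi divergence in the order, and Pinsker gives $\TV(P_n,P_N)\le \sqrt{k^2v(n)/2}$.

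For the accuracy side, set $s=\sqrt{C_r}=(1+3r)/(1-r)$, use the threshold $t=sn$ (the geometric mean of $n$ and $C_rn$), and apply Markov's inequality. We get
\[
  P_n(Y\ge t)\le \frac{rn}{t-n}=\frac{r}{s-1},
  \qquad
  P_N(Y<t)\le \frac{rN}{N-t}\le \frac{r s^2}{s^2-s}=\frac{rs}{s-1}.
\]
The second bound uses $N\ge s^2n$ and the fact that $N\mapsto N/(N-t)$ is decreasing for $N>t$. Adding the two bounds gives $r(1+s)/(s-1)$. With $s-1=4r/(1-r)$ and $s+1=2(1+r)/(1-r)$, this sum equals $(1+r)/2$. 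The event $\{Y\ge t\}$ therefore separates the two laws, so $\TV(P_n,P_N)\ge 1-(1+r)/2=(1-r)/2$.

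Combining the two sides gives $(1-r)^2/4\le k^2v(n)/2$. This is exactly \eqref{eq:two-count-finite} with $k=\lceil C_rn\rceil-n$. For \eqref{eq:two-count-asymptotic}, note that $k/n\to C_r-1$. A direct computation gives $C_r-1=8r(1+r)/(1-r)^2$. Substituting, $(1-r)^2/\bigl(2(C_r-1)^2\bigr)=(1-r)^6/\bigl(128r^2(1+r)^2\bigr)$.

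I expect the main obstacle to be the bookkeeping in the accuracy step, not any deep inequality. The threshold and the constant $C_r$ have to be matched so that the two Markov error terms sum to exactly $(1+r)/2$, and the ceiling in $N$ has to be handled in the right direction. I would also check that passing from R\'enyi orders $\alpha>1$ to KL is legitimate when the laws are not known to have densities. The limit $\alpha\downarrow1$ of $\Dalpha$ equals KL in general, by \citet{VanErvenHarremoes2014}, so this should go through. The remaining steps use only the budgets on the chain's own edges, which is exactly what \cref{lem:pathwise-group} was set up to allow.
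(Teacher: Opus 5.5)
Your proposal is correct and is essentially the paper's proof. Both use a Markov-inequality threshold test to get $\TV(P_n,P_N)\ge(1-r)/2$, then apply \cref{lem:pathwise-group} with $\rho=v(n)$, pass to KL as $\alpha\downarrow1$, and finish with Pinsker; the differences are cosmetic: you put the threshold at $\sqrt{C_r}\,n$ and absorb the ceiling through monotonicity of $N/(N-t)$, whereas the paper takes $t=\sqrt{mn}$ and uses $q=\sqrt{m/n}\ge\sqrt{C_r}$, and you bound $\KL(P_N\|P_n)$ where the paper bounds $\KL(P_n\|P_m)$.
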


\begin{proof}
Fix $n$ and put
\[
  m=\lceil C_rn\rceil,
  \qquad
  q=\sqrt{m/n},
  \qquad
  t=\sqrt{mn}=nq.
\]
Let $P_n$ and $P_m$ be the output laws on nested datasets with affected-group counts $n$ and $m$. Utility first separates these laws; privacy will then limit that separation. The finite expected absolute-error bounds make the relevant deviations integrable, so Markov's inequality gives
\[
  P_n(Y>t)\le\frac{rn}{t-n},
  \qquad
  P_m(Y\le t)\le\frac{rm}{m-t}.
\]
Since $m=nq^2$, the sum of these two bounds is
\[
  r\frac{q+1}{q-1}.
\]
The definition of $C_r$ ensures $q\ge(1+3r)/(1-r)$, so this sum is at most $(1+r)/2$. For $A=\{Y\le t\}$,
\begin{equation}
  \TV(P_n,P_m)
  \ge P_n(A)-P_m(A)
  \ge\frac{1-r}{2}.
  \label{eq:two-count-tv}
\end{equation}

Every one-person edge on the path from count $n$ to count $m$ has budget at most $v(n)$. Applying \cref{lem:pathwise-group} with $\rho=v(n)$ therefore gives, for every $\alpha>1$,
\[
  \Dalpha(P_n\|P_m)\le\alpha(m-n)^2v(n).
\]
Using the limit of R\'enyi divergence to KL divergence as $\alpha\downarrow1$ \citep{VanErvenHarremoes2014} gives
\[
  \KL(P_n\|P_m)\le(m-n)^2v(n).
\]
Pinsker's inequality and \eqref{eq:two-count-tv} give
\[
  \KL(P_n\|P_m)
  \ge2\TV(P_n,P_m)^2
  \ge\frac{(1-r)^2}{2}.
\]
Combining the last two displays proves \eqref{eq:two-count-finite}. Finally,
\[
  C_r-1=\frac{8r(1+r)}{(1-r)^2},
\]
which gives \eqref{eq:two-count-asymptotic} after multiplying by $n^2$ and taking $n\to\infty$.
\end{proof}

\subsection{Absolute error forces information}

The following lemma converts expected absolute estimation error into mutual information. Its $1/4$ term is the mean absolute value of an independent uniform dither on $[-1/2,1/2]$.

\begin{lemma}[Absolute error forces information]
\label[lemma]{lem:error-information}
Let $J$ be an integer-valued random variable with finite support, let $Y$ take values in a standard Borel space, and let $\widetilde J(Y)$ be any real-valued estimator. If
\[
  D=\E\lvert J-\widetilde J(Y)\rvert,
\]
then
\begin{equation}
  I(J;Y)
  \ge
  H(J)-\log\!\left(2e\left(D+\frac14\right)\right).
  \label{eq:error-information}
\end{equation}
\end{lemma}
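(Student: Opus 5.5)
We prove \eqref{eq:error-information} by a dithering argument that turns the discrete variable $J$ into a continuous one, and then apply the maximum-entropy bound for a given mean absolute deviation. Let $W\sim\mathrm{Unif}[-1/2,1/2]$ be independent of $(J,Y)$, and set $S=J+W$. Because $J$ is integer-valued, $J$ is a deterministic function of $S$ (round to the nearest integer; ties have probability zero). Conversely, $S$ is obtained from $J$ by an independent channel. Two facts follow:
\begin{itemize}
  \item The differential entropy of $S$ is $h(S)=H(J)+h(W)=H(J)$, since the density of $S$ equals $P(J=j)$ on each unit cell $[j-1/2,j+1/2)$.
  \item Conditioning on $Y$ and using that $W$ is independent of $(J,Y)$, we get $h(S\mid Y)=H(J\mid Y)$ in the same way, applied cell-by-cell under each conditional law.
\end{itemize}
Hence $I(J;Y)=H(J)-H(J\mid Y)=h(S)-h(S\mid Y)$. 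The required inequality therefore reduces to
\[
  h(S\mid Y)\le\log\bigl(2e(D+1/4)\bigr).
\]

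Next we bound the conditional differential entropy by the error of the estimator $\widetilde J(Y)$, treated as a location shift. Translation invariance gives $h(S\mid Y)=h(S-\widetilde J(Y)\mid Y)$, and since conditioning reduces entropy, this is at most $h(S-\widetilde J(Y))$. The real random variable $E=S-\widetilde J(Y)=(J-\widetilde J(Y))+W$ satisfies
\[
  \E|E|\le \E|J-\widetilde J(Y)|+\E|W| = D+1/4
\]
by the triangle inequality, with $\E|W|=1/4$. Among real densities with $\E|E|\le b$, the maximum differential entropy is $\log(2eb)$, attained by the Laplace law. This follows from Gibbs' inequality against the Laplace density $e^{-|x|/b}/(2b)$:
\[
  h(E)\le \E\bigl[|E|/b\bigr]+\log(2b)\le 1+\log(2b).
\]
Taking $b=D+1/4$ and using monotonicity in $b$ gives the bound. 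If $D=\infty$ the claim is trivial.

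The main obstacle is the measure-theoretic bookkeeping. We must justify $h(S\mid Y)=H(J\mid Y)$ and the chain rule for mutual information when $Y$ lives in a general standard Borel space and lacks a density. To handle this, we use regular conditional distributions of $J$ given $Y$, which exist because the space is standard Borel. For each fixed $y$, the conditional law of $S$ is a piecewise-constant density, with $h(S\mid Y=y)=H(J\mid Y=y)$ exactly, and we then integrate over $y$. Both entropies are finite: $H(J)$ is finite because the support is finite, and $h(S\mid Y=y)\le\log|\mathrm{supp}\,J|$. We also need $h(E)$ to be well defined, possibly $-\infty$, which is harmless for an upper bound. Finally, the Gibbs step requires only $\E|E|<\infty$, which holds because $D<\infty$ and $|W|\le 1/2$.
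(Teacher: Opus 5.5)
Your proof is correct and follows the paper's argument: the same uniform dither gives $h(S)=H(J)$ and $h(S\mid Y)=H(J\mid Y)$, followed by the same triangle-inequality bound $D+1/4$ and the same Laplace maximum-entropy bound. The only difference is in the last step: you shift by $\widetilde J(Y)$ and use "conditioning reduces entropy" to apply the Laplace bound once to the unconditional error, whereas the paper applies it to each conditional law with error $d(y)$ and then averages with Jensen's inequality (concavity of $\log$); the two routes are interchangeable.
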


\begin{proof}
Let $U$ be independent of $(J,Y)$ and uniform on $[-1/2,1/2]$, and set $X=J+U$. The unit intervals around distinct integers overlap only at endpoints. Thus $h(X)=H(J)$. Conditional on $Y=y$, the law of $X$ is a mixture of uniform densities on those disjoint intervals, with weights $\Pr(J=j\mid Y=y)$. Each component has differential entropy zero because its interval has length one. Therefore
\[
  h(X\mid Y=y)=H(J\mid Y=y)
\]
for almost every $y$, and hence $I(X;Y)=I(J;Y)$ \citep{CoverThomas2006}.

Use the same estimator for $X$. By the triangle inequality,
\[
  \E\lvert X-\widetilde J(Y)\rvert
  \le D+\E|U|
  =D+\frac14.
\]
A real random variable $W$ with $\E|W|\le d$ has differential entropy at most $\log(2ed)$; the centered Laplace law is the maximizer \citep{CoverThomas2006}. This follows by comparing the density of $W$ with $(2d)^{-1}e^{-|w|/d}$ and using nonnegativity of Kullback-Leibler divergence.

Because $Y$ is standard Borel, regular conditional laws exist. For almost every $y$, define
\[
  d(y)=\E\!\left[
    \left|X-\widetilde J(y)\right|
    \,\middle|\,Y=y
  \right].
\]
The entropy bound gives $h(X\mid Y=y)\le\log(2e\,d(y))$. Averaging over $y$ and applying Jensen's inequality gives
\[
  h(X\mid Y)
  \le
  \log\!\left(2e\,\E\lvert X-\widetilde J(Y)\rvert\right).
\]
Hence
\[
  I(J;Y)
  = I(X;Y)
  =h(X)-h(X\mid Y),
\]
which is exactly \eqref{eq:error-information}.
\end{proof}

\subsection{The general many-count inequality}

\begin{theorem}[Many-count inequality for privacy and utility]
\label[theorem]{thm:manycount}
Fix $0<r<1$. Write $\E_m$ for expectation when the affected-group count is $m$. Suppose a mechanism produces an estimate $Y$ such that, at every positive count $m$,
\begin{equation}
  \E_m|Y-m|\le rm.
  \label{eq:many-utility}
\end{equation}
Suppose also that it satisfies count-dependent group-wise zCDP for a finite, nonincreasing function $v$.

Fix a base count $n\ge1$. Let $J$ be any finitely supported random variable on the nonnegative integers, and let $j_\star$ be any nonnegative integer with $\E(J-j_\star)^2>0$. Then
\begin{equation}
  v(n)
  \ge
  \frac{
    \pos{
      H(J)-
      \log\!\left(
        2e\left(r(n+\E J)+\frac14\right)
      \right)
    }
  }{
    \E(J-j_\star)^2
  }.
  \label{eq:general-many-count}
\end{equation}
\end{theorem}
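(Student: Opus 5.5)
Fix $n\ge1$, and place the hidden offset $J$ on top of a nested chain of datasets. Take datasets $D_j$ whose affected-group count is $n+j$, nested so that $D_{j+1}$ adds one record to $D_j$. Let $P_j$ be the output law on $D_j$, restricted to the released coordinate $Y$. Data processing lets us drop the other coordinates without increasing divergence. Set $Y$ to be the release on $D_J$, with $J$ drawn independently of the mechanism's randomness. The plan is to compare a lower bound on $I(J;Y)$ from utility with an upper bound from privacy.

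\emph{Utility side.} Use the estimator $\widetilde J(Y)=Y-n$. Then
\[
  \E|J-\widetilde J(Y)| = \E\,\E_{n+J}|Y-(n+J)| \le \E\, r(n+J)=r(n+\E J),
\]
which uses \eqref{eq:many-utility} at the positive count $n+J\ge1$. \cref{lem:error-information} then gives
\[
  I(J;Y)\ge H(J)-\log\bigl(2e(r(n+\E J)+1/4)\bigr),
\]
since $\log$ is increasing in $D$. Because $I\ge0$, the bound also holds with the positive part applied to the right-hand side.

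\emph{Privacy side.} Use the variational (golden) formula
\[
  I(J;Y)=\min_Q \E_J\,\KL(P_J\|Q)\le \E_J\,\KL(P_J\|P_{j_\star}),
\]
with the reference law $Q=P_{j_\star}$. For each $j$ in the support of $J$, the chain between counts $n+j$ and $n+j_\star$ uses only edges whose larger endpoint is at least $n+1$. Since $v$ is nonincreasing, each such edge has budget at most $v(n+1)\le v(n)$, and the bound holds in both directions. \cref{lem:pathwise-group} with $\rho=v(n)$ gives $\Dalpha(P_j\|P_{j_\star})\le\alpha (j-j_\star)^2 v(n)$. Letting $\alpha\downarrow1$ as in \cref{prop:two-count-lower} yields
\[
  \KL(P_j\|P_{j_\star})\le (j-j_\star)^2v(n).
\]
Averaging over $J$ gives $I(J;Y)\le v(n)\,\E(J-j_\star)^2$.

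\emph{Combining.} Chaining the two sides and dividing by $\E(J-j_\star)^2>0$ gives \eqref{eq:general-many-count}. The main obstacle is measure-theoretic rather than conceptual. First, the output space must be standard Borel for \cref{lem:error-information}; we can reduce to the real-valued coordinate $Y$, which resolves this. Second, the mutual-information upper bound via a fixed reference law needs care when some divergence is infinite. That case cannot arise here, because the privacy bound makes every divergence finite. The bound $I(J;Y)\le\sum_j \Pr(J=j)\KL(P_j\|Q)$ itself follows from $I=\sum_j p_j\KL(P_j\|\bar P)$ together with $\sum_j p_j\KL(P_j\|Q)=I+\KL(\bar P\|Q)$.
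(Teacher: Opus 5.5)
Your proposal is correct and follows essentially the same route as the paper. Both arguments run the mechanism on a nested chain indexed by the hidden offset $J$, and both use the pathwise group-privacy bound with $\rho=v(n)$, followed by the limit $\alpha\downarrow1$, to get $I(J;Y)\le\E_J\KL(P_J\|P_{j_\star})\le v(n)\E(J-j_\star)^2$. Both then feed the estimator $Y-n$ into \cref{lem:error-information}, and your observation that each edge's budget is at most $v(n+1)\le v(n)$ is just a slightly sharper version of the paper's remark that every count on the path is at least $n$.
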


\begin{proof}
Fix a nested sequence of datasets $D_0,D_1,\ldots$ in which $D_j$ has affected-group count $n+j$ and each step adds one record from that group. Choose the hidden offset $J$, run the mechanism on $D_J$, and call the relevant released estimate $Y$. Let $P_j$ be its law on $D_j$, and use $P_{j_\star}$ as a reference law.

The proof has two parts. Privacy upper-bounds $I(J;Y)$; utility lower-bounds it.

Every count on the path from $n+j$ to $n+j_\star$ is at least $n$. Since $v$ is nonincreasing, every one-person edge on that path has zCDP budget at most $v(n)$. With $k=|j-j_\star|$, \cref{lem:pathwise-group} gives, for every $\alpha>1$,
\[
  \Dalpha(P_j\|P_{j_\star})
  \le \alpha k^2v(n).
\]
Using the limit of R\'enyi divergence to KL divergence as $\alpha\downarrow1$ \citep{VanErvenHarremoes2014} gives
\begin{equation}
  \KL(P_j\|P_{j_\star})
  \le
  (j-j_\star)^2v(n).
  \label{eq:path-kl}
\end{equation}
Let $\overline P=\sum_j\Pr(J=j)P_j$ be the mixture law of $Y$. Averaging \eqref{eq:path-kl} over $J$ and using the standard mutual-information/relative-entropy identity \citep{CoverThomas2006} gives, for every reference law $Q$,
\[
  \E_J\KL(P_J\|Q)
  =I(J;Y)+\KL(\overline P\|Q).
\]
Taking $Q=P_{j_\star}$ and dropping the final nonnegative term yields
\begin{equation}
  I(J;Y)
  \le
  v(n)\E(J-j_\star)^2.
  \label{eq:information-upper}
\end{equation}

Now use utility. With the estimator $\widetilde J(Y)=Y-n$,
\begin{align*}
  \E\lvert J-(Y-n)\rvert
  &=\E\lvert(n+J)-Y\rvert\\
  &\le r\E(n+J)\\
  &=r(n+\E J).
\end{align*}
By \cref{lem:error-information},
\begin{equation}
  I(J;Y)
  \ge
  H(J)-
  \log\!\left(
    2e\left(r(n+\E J)+\frac14\right)
  \right).
  \label{eq:information-lower}
\end{equation}
Combining \eqref{eq:information-upper} and \eqref{eq:information-lower} gives the stated ratio when its numerator is positive. If the numerator is negative, the bound is vacuous; since $v(n)\ge0$, the positive part covers both cases.
\end{proof}

Equation~\eqref{eq:general-many-count} displays three competing quantities. The entropy $H(J)$ strengthens the information lower bound. The mean $\E J$ enlarges the allowed error, and $\E(J-j_\star)^2$ is the privacy cost. A strong prior balances all three.

\subsection{A fully explicit finite bound}

The uniform distribution gives a clean formula with no limiting argument.

\begin{corollary}[Uniform many-count bound]
\label[corollary]{cor:uniform-lower}
Under the assumptions of \cref{thm:manycount}, let $K\ge3$ be odd. Then, for every $n\ge1$,
\begin{equation}
  v(n)
  \ge
  \frac{12}{K^2-1}
  \pos{
    \log
    \frac{K}{
      2e\left(
        r\left(n+\frac{K-1}{2}\right)+\frac14
      \right)
    }
  }.
  \label{eq:uniform-finite}
\end{equation}
\end{corollary}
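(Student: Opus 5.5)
This is a direct specialization of \cref{thm:manycount}. I take $J$ uniform on $\{0,1,\ldots,K-1\}$ with $K\ge3$ odd, and choose the reference index $j_\star=(K-1)/2$, the midpoint of the support. This $j_\star$ is a nonnegative integer precisely because $K$ is odd, which is why the statement restricts to odd $K$. With this choice $\E(J-j_\star)^2=\Var(J)>0$, so the hypotheses of \cref{thm:manycount} hold for every base count $n\ge1$.

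The three quantities in \eqref{eq:general-many-count} are then routine to compute.
\begin{enumerate}
  \item The entropy is $H(J)=\log K$.
  \item The mean is $\E J=(K-1)/2$, so the error term becomes $r\bigl(n+\tfrac{K-1}{2}\bigr)+\tfrac14$.
  \item The discrete-uniform variance is
  \[
    \E\bigl(J-\tfrac{K-1}{2}\bigr)^2=\frac{K^2-1}{12}.
  \]
  One way to confirm it: $J-j_\star$ is uniform on $\{-m,\ldots,m\}$ with $m=(K-1)/2$. Its second moment is $\frac{2}{K}\sum_{i=1}^m i^2=\frac{m(m+1)}{3}=\frac{(K-1)(K+1)}{12}$.
\end{enumerate}

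Substituting these into \eqref{eq:general-many-count}, the numerator becomes
\[
  \pos{\log K-\log\!\left(2e\left(r\left(n+\tfrac{K-1}{2}\right)+\tfrac14\right)\right)},
\]
which equals the positive part of the logarithm of the displayed quotient. Dividing by $(K^2-1)/12$ gives \eqref{eq:uniform-finite}.

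No step is genuinely difficult. The only point needing care is that $j_\star$ must be an integer for \cref{thm:manycount} to apply, and choosing the midpoint both satisfies this and minimizes the privacy cost $\E(J-j_\star)^2$.
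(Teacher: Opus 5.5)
Your proposal is correct and matches the paper's proof: take $J$ uniform on $\{0,\ldots,K-1\}$ with $j_\star=(K-1)/2$, compute $H(J)=\log K$, $\E J=(K-1)/2$, and $\E(J-j_\star)^2=(K^2-1)/12$, and substitute these into \eqref{eq:general-many-count}. Your explicit second-moment check and your remark that odd $K$ is what makes $j_\star$ an integer are details the paper leaves implicit.
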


\begin{proof}
Take $J$ uniform on $\{0,1,\ldots,K-1\}$ and set $j_\star=(K-1)/2$. Then
\[
  H(J)=\log K,
  \qquad
  \E J=\frac{K-1}{2},
  \qquad
  \E(J-j_\star)^2=\frac{K^2-1}{12}.
\]
Substitute these three identities into \eqref{eq:general-many-count}.
\end{proof}

Fix $r\in(0,1)$ and $a>0$, and choose odd integers $K_n\ge3$ such that $K_n/(rn)\to a$. Then
\begin{equation}
  \liminf_{n\to\infty}r^2n^2v(n)
  \ge
  \frac{12}{a^2}
  \pos{
    \log\frac{a}{2e(1+ar/2)}
  }.
  \label{eq:uniform-asymptotic}
\end{equation}
If we next let $r\downarrow0$, the right side is maximized at $a=2e^{3/2}$ and equals
\begin{equation}
  \frac{3}{2e^3}
  \approx0.0746806.
  \label{eq:uniform-constant}
\end{equation}
The uniform-prior bound improves the two-count constant by a factor of $192/e^3\approx9.56$.

\subsection{A bell-shaped prior gives the sharper constant}

The next construction uses a bell-shaped prior to increase entropy relative to squared spread. The following quantization lemma transfers this continuous-prior advantage to integer counts.

\begin{lemma}[Fine integer quantization]
\label[lemma]{lem:quantization}
Let $Z$ have a bounded, compactly supported, piecewise-continuous density $f$. Let $Q_S$ be $SZ$ rounded to the nearest integer, with either convention at ties. As $S\to\infty$,
\begin{align}
  H(Q_S)&=\log S+h(Z)+o(1),
  \label{eq:quant-entropy}\\
  \E Q_S&=S\E Z+O(1),
  \label{eq:quant-mean}\\
  \E Q_S^2&=S^2\E Z^2+O(S).
  \label{eq:quant-second}
\end{align}
\end{lemma}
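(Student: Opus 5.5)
The plan is to write $Q_S$ in terms of $SZ$ plus a bounded rounding error, and to treat entropy separately through a Riemann-sum argument. Set $E_S=Q_S-SZ$. Then $|E_S|\le 1/2$ almost surely, whatever tie convention is used. The two moment claims follow directly from this.

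\textbf{Moments.} For the mean, $\E Q_S=S\E Z+\E E_S$, and $|\E E_S|\le 1/2$, which gives \eqref{eq:quant-mean}. For the second moment, expand
\[
  Q_S^2=S^2Z^2+2SZE_S+E_S^2.
\]
Because $Z$ has compact support, $|Z|\le M$ for some $M$. Hence $|2SZE_S|\le SM$ and $E_S^2\le 1/4$. Together these give \eqref{eq:quant-second}. No regularity of $f$ beyond compact support is needed here.

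\textbf{Entropy.} For \eqref{eq:quant-entropy}, write $p_j=\Pr(Q_S=j)=\int_{I_j}f(z)\,dz$. Here $I_j$ is the interval of length $1/S$ around $j/S$; tie conventions only affect a null set. Define the piecewise-constant density $f_S(z)=Sp_j$ for $z\in I_j$. A direct computation gives
\[
  H(Q_S)=-\sum_j p_j\log p_j=\log S-\int f_S\log f_S .
\]
So it suffices to show $\int f_S\log f_S\to\int f\log f=-h(Z)$.

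\textbf{Main obstacle: convergence of $\int f_S\log f_S$.} This is the step I expect to need the most care, and I would proceed in three steps.

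\begin{enumerate}
  \item Establish pointwise convergence. The function $f_S$ is the average of $f$ over the cell containing $z$, and that cell shrinks to $z$. At every continuity point of $f$, this gives $f_S(z)\to f(z)$. Piecewise continuity means the discontinuities form a finite set, hence a null set, so $f_S\to f$ almost everywhere.
  \item Establish a uniform bound. Since $0\le f\le B$, also $0\le f_S\le B$. The function $\phi(u)=u\log u$ (with $\phi(0)=0$) is continuous on $[0,B]$, so $|\phi(f_S)|$ is bounded by a constant $C_B$.
  \item Confine the supports. All $f_S$ are supported in a fixed bounded set, namely the support of $f$ enlarged by at most $1/S\le 1$.
\end{enumerate}

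Dominated convergence, with the constant $C_B$ on a bounded set as the dominating function, then gives $\int\phi(f_S)\to\int\phi(f)$. This yields $H(Q_S)=\log S+h(Z)+o(1)$.

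An alternative route uses Jensen's inequality: convexity of $\phi$ gives $\int\phi(f_S)\le\int\phi(f)$, which is one direction. I would rely on the dominated-convergence argument as the primary route, since it delivers both directions at once.

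Boundedness and compact support are exactly what make the dominated-convergence step work. Piecewise continuity is used only to get almost-everywhere convergence of the cell averages. That convergence would in fact hold for any $L^1$ density by the Lebesgue differentiation theorem, so this hypothesis could be relaxed if needed.
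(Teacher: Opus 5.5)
Your proposal is correct and follows essentially the same route as the paper: the cell-average density $f_S$ with $H(Q_S)=\log S-\int f_S\log f_S$, almost-everywhere convergence from piecewise continuity, dominated convergence using uniform boundedness and a common compact support, and the pointwise bound $|Q_S-SZ|\le 1/2$ for the two moment estimates. Your side remarks are also correct: Jensen's inequality gives one direction of the entropy limit for free, and Lebesgue differentiation would let you drop piecewise continuity while keeping boundedness.
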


\begin{proof}
For each quantization cell $C_{S,k}$, let
\[
  f_S(z)=S\int_{C_{S,k}}f(u)\,du
  \qquad(z\in C_{S,k}).
\]
If $p_{S,k}=\Pr(Q_S=k)$, then $p_{S,k}=f_S(z)/S$ on that cell and hence
\[
  H(Q_S)-\log S
  =-\int f_S(z)\log f_S(z)\,dz.
\]
Piecewise continuity gives $f_S\to f$ almost everywhere. The functions $f_S$ are uniformly bounded and supported in a common compact enlargement of the support of $f$, so dominated convergence applied to $t\mapsto t\log t$ proves \eqref{eq:quant-entropy}. The pointwise bound $|Q_S-SZ|\le1/2$ gives \eqref{eq:quant-mean}. Writing $Q_S=SZ+\Delta_S$ with $|\Delta_S|\le1/2$, boundedness of $Z$ gives $2SZ\Delta_S+\Delta_S^2=O(S)$ pointwise, which proves \eqref{eq:quant-second}.
\end{proof}

\begin{corollary}[Small-error lower-bound constant]
\label[corollary]{cor:smallr-lower}
For each $r\in(0,1)$, let a mechanism satisfy the assumptions of \cref{thm:manycount}, with budget $v_r$. Then
\begin{equation}
  \liminf_{r\downarrow0}
  \liminf_{n\to\infty}
  r^2n^2v_r(n)
  \ge
  \frac{\pi}{4e^2}
  \approx0.1062921.
  \label{eq:gaussian-constant}
\end{equation}
\end{corollary}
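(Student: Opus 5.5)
The plan is to apply \cref{thm:manycount} with a prior $J$ that is a finely quantized, shifted, truncated Gaussian, so that $J$ has nearly maximal entropy for its variance, and then pass to the iterated limit. Fix a bounded, compactly supported, piecewise-continuous density $f$ of a random variable $Z$ supported in $[0,L]$, and write $\tau^2=\Var Z$; in the end $f$ will be a truncated Gaussian. Fix also a scale parameter $s>0$. For each $n$ put $S=srn$ and let $J=Q_S$ be $SZ$ rounded to the nearest integer. Then $J$ is finitely supported on the nonnegative integers. Take $j_\star$ to be the integer nearest $S\E Z$.

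By \cref{lem:quantization}, $H(J)=\log S+h(Z)+o(1)$ and $\E J=S\E Z+O(1)$. For the denominator,
\[
\E(J-j_\star)^2=\E J^2-2j_\star\E J+j_\star^2=S^2\tau^2+O(S),
\]
using \eqref{eq:quant-mean}, \eqref{eq:quant-second}, and $|j_\star-S\E Z|\le 1/2$. Substituting into \eqref{eq:general-many-count} with $S=srn$ gives a numerator equal to
\[
\log(srn)+h(Z)-\log\!\bigl(2e(rn+rsrn\E Z+O(r)+\tfrac14)\bigr)+o(1),
\]
which tends, as $n\to\infty$, to
\[
\log s+h(Z)-\log(2e)-\log(1+rs\E Z).
\]
The denominator is $s^2r^2n^2\tau^2(1+o(1))$. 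Hence
\[
\liminf_{n\to\infty}r^2n^2v_r(n)\ge\frac{\pos{\log s+h(Z)-\log 2e-\log(1+rs\E Z)}}{s^2\tau^2}.
\]
Since $Z$ and $s$ are fixed, letting $r\downarrow0$ removes the $\log(1+rs\E Z)$ term. This yields
\[
\liminf_{r\downarrow0}\liminf_{n\to\infty}r^2n^2v_r(n)\ge\frac{\pos{\log\bigl(se^{h(Z)}/(2e)\bigr)}}{s^2\tau^2}.
\]

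Next optimize over $s$. Substitute $u=se^{h(Z)}/(2e)$. The bound becomes $\frac{e^{2h(Z)}}{4e^2\tau^2}\cdot\frac{\log u}{u^2}$. The function $\log u/u^2$ is maximized at $u=\sqrt e$ with value $1/(2e)$. This gives the lower bound
\[
\frac{e^{2h(Z)}}{8e^3\tau^2}.
\]
For an exact Gaussian, $e^{2h}=2\pi e\tau^2$, and the bound would equal $2\pi e/(8e^3)=\pi/(4e^2)$. Gaussians are not compactly supported, however, so I will take $Z$ to be a shifted Gaussian truncated to $[\mu-T,\mu+T]\subset[0,\infty)$. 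As $T\to\infty$, both $h(Z)$ and $\tau^2$ converge to their Gaussian values. So for every $\varepsilon>0$ some admissible $Z$ gives a bound of at least $\pi/(4e^2)-\varepsilon$, and letting $\varepsilon\downarrow0$ proves \eqref{eq:gaussian-constant}.

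The main obstacle I expect is the bookkeeping of limits. The error terms in \cref{lem:quantization} are uniform only for fixed $f$. The $rs\E Z$ term in the error allowance vanishes only after $n\to\infty$ and then $r\downarrow0$, with $Z$ and $s$ held fixed. The truncation must therefore be chosen before any of these limits, and the final $\varepsilon$-argument should be stated carefully so that it matches the iterated $\liminf$ in the statement. A secondary check is that the numerator stays strictly positive at the chosen $u=\sqrt e$ for all sufficiently small $r$, so the positive part does not interfere with the argument.
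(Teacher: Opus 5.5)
Your proposal is correct and follows the paper's proof essentially step for step. You plug a finely quantized truncated Gaussian prior with scale $S\propto rn$ into \cref{thm:manycount}, take $n\to\infty$ and then $r\downarrow0$, optimize the scale (your $u=\sqrt e$ is the paper's $x_L=2e^{3/2-h_L}$), and finally remove the truncation. The only differences are cosmetic and change neither the $O(S)$ variance error nor the vanishing $\log(1+\mathrm{const}\cdot r)$ term: you shift the continuous density into $[0,\infty)$ before quantizing and center $j_\star$ at the rounded mean, whereas the paper quantizes the centered $Z_L$, shifts by the integer $B=\lceil LS\rceil+1$, and takes $j_\star=B$.
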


\begin{proof}
We take the limits in the order displayed in the result: first $n\to\infty$, then $r\downarrow0$, and finally the truncation level $L\to\infty$.

Fix $L>0$. Let $Z_L$ be a standard Gaussian conditioned on $[-L,L]$, and write
\[
  h_L=h(Z_L),
  \qquad
  \tau_L^2=\Var(Z_L).
\]
The distribution is symmetric, so $\E Z_L=0$.

Fix $x>0$ and put $S=xrn$. Let $Q_S$ be $SZ_L$ rounded to the nearest integer, let
\[
  B=\lceil LS\rceil+1,
  \qquad
  J=B+Q_S,
  \qquad
  j_\star=B.
\]
The shift $B$ makes $J$ nonnegative. By \cref{lem:quantization}, as $n\to\infty$ with $r,L,x$ fixed,
\begin{align*}
  H(J)&=\log S+h_L+o(1),\\
  \E(J-B)^2&=S^2\tau_L^2+O(S),\\
  \E J&=LS+o(S).
\end{align*}
Apply \cref{thm:manycount} and use $S=xrn$. Here $\log S=\log(xrn)$, while the utility term inside the logarithm is
\[
  2e\left(r(n+\E J)+\frac14\right)
  =2e\,rn\bigl(1+Lxr+o(1)\bigr).
\]
Thus the $\log(rn)$ terms cancel. After multiplying by $r^2n^2$ and taking $n\to\infty$,
\begin{equation}
  \liminf_{n\to\infty}r^2n^2v_r(n)
  \ge
  \frac{
    \pos{
      \log x+h_L-\log(2e)-\log(1+Lxr)
    }
  }{
    x^2\tau_L^2
  }.
  \label{eq:truncated-lower}
\end{equation}
Now let $r\downarrow0$. For fixed $L$, optimize the right side over $x$. The optimum satisfies
\[
  \log x+h_L-\log(2e)=\frac12.
\]
Equivalently, $x=x_L=2e^{3/2-h_L}$, and the positive-part bracket equals $1/2$. This gives
\begin{equation}
  \liminf_{r\downarrow0}
  \liminf_{n\to\infty}r^2n^2v_r(n)
  \ge
  \frac{e^{2h_L}}{8e^3\tau_L^2}.
  \label{eq:truncated-constant}
\end{equation}
Finally let $L\to\infty$. The truncated Gaussian converges to the standard Gaussian in entropy and variance:
\[
  h_L\to\frac12\log(2\pi e),
  \qquad
  \tau_L^2\to1.
\]
Equation \eqref{eq:truncated-constant} therefore tends to
\[
  \frac{2\pi e}{8e^3}
  =\frac{\pi}{4e^2}.
\]
\end{proof}

The Gaussian is not an arbitrary guess. Within the continuous limiting-prior and scale-optimization argument above, a limiting prior $Z$ produces the factor
\[
  \frac{e^{2h(Z)}}{8e^3\Var(Z)}.
\]
A Gaussian maximizes differential entropy at fixed variance \citep{CoverThomas2006}, so this quantity is at most $\pi/(4e^2)$. The truncated Gaussian sequence above reaches that limit. Thus Corollary~\ref{cor:smallr-lower} is optimal within this continuous-prior template; this observation does not rule out discrete, multiscale, or $r$-dependent uses of \cref{thm:manycount}.

Call a family $\{(\mech_r,v_r):0<r<1\}$ admissible if, for every $r\in(0,1)$, the mechanism $\mech_r$ meets the positive-count utility condition $\E_m|Y-m|\le rm$ for all $m\ge1$ and has a finite, nonincreasing count-dependent group-wise zCDP budget $v_r$. For each such family, form the iterated leading coefficient
\[
  \liminf_{r\downarrow0}\liminf_{n\to\infty}
  r^2n^2v_r(n).
\]
Let $C_\star$ be the infimum of this quantity over all admissible families. The upper construction and Corollary~\ref{cor:smallr-lower} give
\begin{equation}
  \frac{\pi}{4e^2}
  \;\le\;
  C_\star
  \;\le\;
  \frac1\pi.
  \label{eq:constant-sandwich}
\end{equation}
For the upper bound, choose $r_0(r)=r(1-r)$ and $c(r)=r$ for $0<r<1$. This gives an admissible family. Since $r_0(r)/r\to1$ and $c(r)\to0$, \eqref{eq:smallrconstant} gives the coefficient $1/\pi$. The ratio between the two displayed constants is
\[
  \frac{1/\pi}{\pi/(4e^2)}
  =\frac{4e^2}{\pi^2}
  \approx2.995.
\]
Thus the lower and upper bounds are now within a factor of three.

\section{Why the unequal-variance Gaussian fails the definition}
\label{sec:hetero}

A tempting mechanism releases
\begin{equation}
  X_n\sim\N(n,r^2n^2).
  \label{eq:hetero}
\end{equation}
It has the right scale of expected error. \citet{Azize2023} studies this deterministic-variance mechanism and derives a one-sided term proportional to $\alpha/(r^2n^2)$, revealing the quadratic scale. The analysis below applies only to this mechanism, not to the randomized-variance Algorithm~1 of \citet{PujolDesfontaines2023}. Unequal neighboring variances create a tail obstruction: one R\'enyi-divergence direction becomes infinite at sufficiently high orders.

\begin{lemma}[Finite R\'enyi divergence for unequal Gaussians]
\label[lemma]{lem:unequal-gaussian}
Let $P=\N(\mu_P,s_P^2)$ and $Q=\N(\mu_Q,s_Q^2)$ with positive variances. For $\alpha>1$, $\Dalpha(P\|Q)$ is finite if and only if
\begin{equation}
  \alpha s_Q^2+(1-\alpha)s_P^2>0.
  \label{eq:gaussian-finite}
\end{equation}
\end{lemma}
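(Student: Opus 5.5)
The plan is to write the R\'enyi divergence as an explicit Gaussian integral and test when the exponent is integrable. Recall that for $\alpha>1$,
\[
  \Dalpha(P\|Q)=\frac{1}{\alpha-1}\log\int p(x)^{\alpha}q(x)^{1-\alpha}\,dx .
\]
So $\Dalpha(P\|Q)$ is finite exactly when this integral is finite; the integrand is positive, so the integral cannot vanish. Substituting the two normal densities, the integrand becomes a positive constant times $\exp(-g(x))$. Here $g$ is a quadratic in $x$:
\[
  g(x)=\frac{\alpha(x-\mu_P)^2}{2s_P^2}+\frac{(1-\alpha)(x-\mu_Q)^2}{2s_Q^2}.
\]

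The coefficient of $x^2$ in $g$ is
\[
  \frac12\left(\frac{\alpha}{s_P^2}+\frac{1-\alpha}{s_Q^2}\right)
  =\frac{\alpha s_Q^2+(1-\alpha)s_P^2}{2s_P^2s_Q^2}.
\]
Its sign is therefore the sign of the left side of \eqref{eq:gaussian-finite}. I then split into three cases.

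\begin{enumerate}
  \item If the leading coefficient is positive, completing the square shows that $\int e^{-g}$ is a finite Gaussian integral. Hence $\Dalpha(P\|Q)$ is finite; it even has the usual closed form, which we do not need here.
  \item If the leading coefficient is negative, then $g(x)\to-\infty$ as $|x|\to\infty$. The integrand is then unbounded at infinity, so the integral diverges.
  \item If the leading coefficient is zero, then $g$ is affine: $g(x)=bx+\text{const}$. The integral $\int e^{-bx}\,dx$ over $\mathbb{R}$ diverges whether $b$ is zero or not.
\end{enumerate}

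In the last two cases the divergence is $+\infty$, which gives the ``only if'' direction.

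The only point needing care is the boundary case where the left side of \eqref{eq:gaussian-finite} equals zero. Because $\alpha>1$, the condition forces $s_Q^2<s_P^2$ there, and the leftover linear term never makes the integral converge on the whole line. This case is easy to overlook, but the affine-exponent argument above settles it. Everything else is routine completion of the square.
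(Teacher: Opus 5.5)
Your proposal is correct and follows essentially the same route as the paper. You write $\int p^{\alpha}q^{1-\alpha}$ as a positive constant times the exponential of a quadratic, read finiteness off the sign of the $x^2$ coefficient $\frac12\bigl(\alpha/s_P^2+(1-\alpha)/s_Q^2\bigr)$, and note that the affine exponent at equality is still nonintegrable. Your three-case split only spells out what the paper states in a few lines.
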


\begin{proof}
The defining integral for R\'enyi divergence is $\int p^\alpha q^{1-\alpha}$. Its exponent is a quadratic polynomial in the integration variable. The coefficient of the negative quadratic term is
\[
  \frac12\left(\frac{\alpha}{s_P^2}
               +\frac{1-\alpha}{s_Q^2}\right).
\]
The integral is finite exactly when this coefficient is positive. Multiplying by $s_P^2s_Q^2$ gives \eqref{eq:gaussian-finite}. At equality the remaining exponential is nonintegrable as well.
\end{proof}

The boundary edge must be handled separately because $X_0=\N(0,0)=\delta_0$, whereas \cref{lem:unequal-gaussian} assumes positive variances. Since the nondegenerate law $X_1=\N(1,r^2)$ is not absolutely continuous with respect to $X_0$,
\[
  \Dalpha(X_1\|X_0)=\infty
  \qquad\text{for every }\alpha>1.
\]
Thus the edge $0\leftrightarrow1$ fails the required bound at every R\'enyi order.

For a positive edge $n\leftrightarrow n+1$ with $n\ge1$, apply the lemma to the reverse divergence from the larger-variance law at count $n+1$ to the smaller-variance law at count $n$. It is finite only if
\[
  \alpha n^2+(1-\alpha)(n+1)^2>0,
\]
or equivalently
\begin{equation}
  \alpha<\frac{(n+1)^2}{2n+1}.
  \label{eq:alpha-threshold}
\end{equation}
Consequently every positive edge has infinite reverse divergence at and above this finite threshold. The group-wise zCDP definition requires both directions for \emph{every} $\alpha>1$, so \eqref{eq:hetero} fails at the boundary for every order and on every positive edge above a finite order.

Azize's calculation identifies the correct $n^{-2}$ scale, but an all-orders zCDP construction must avoid unequal neighboring variances. The shifted log-Gaussian mechanism does so: although its count-space variance changes with the count, every transformed law has variance $\sigma^2$. The analysis of Pujol and Desfontaines's Algorithm~1 remains open; the mechanism above proves feasibility without resolving it.

\section{The remaining coefficient question}
\label{sec:discussion}

The inverse-square rate is now established for $0<r<1$. One quantitative question remains: the exact value of the iterated coefficient $C_\star$ in \eqref{eq:constant-sandwich}. This section places the known bounds side by side and explains where further improvement would have to come from.

\begin{table}[H]
\centering
\small
\begin{tabularx}{\textwidth}{@{}Y L{0.17\textwidth} L{0.25\textwidth}@{}}
\toprule
\textbf{Argument} & \textbf{Role} & \textbf{Small-$r$ coefficient} \\
\midrule
Two-count proof using Markov and Pinsker & lower bound & $C_\star\ge 1/128\approx0.00781$ \\
Uniform many-count prior & lower bound & $C_\star\ge 3/(2e^3)\approx0.07468$ \\
Truncated Gaussian prior & lower bound & $C_\star\ge \pi/(4e^2)\approx0.10629$ \\
Shifted log-Gaussian mechanism & upper bound & $C_\star\le 1/\pi\approx0.31831$ \\
\bottomrule
\end{tabularx}
\caption{Known small-$r$ leading coefficients for the remaining coefficient question.}
\label{tab:constants}
\end{table}

\paragraph{The zero-count repair is part of the statement.}
The original strict relative-error requirement is logically impossible at zero. Equation \eqref{eq:repaired-utility} is a natural repair, not a consequence of the original wording. Readers interested only in nonempty groups may instead impose the original requirement for $n\ge1$; the same mechanism and proofs apply.

\paragraph{Why the two-count proof stops at $1/128$.}
Write $m-n=arn$ and place a threshold at $t=n+\theta(m-n)$. In the small-$r$ limit, Markov's inequality and Pinsker's inequality give at best
\[
  \frac{2}{a^2}
  \pos{
    1-\frac{1}{a\theta(1-\theta)}
  }^2.
\]
This expression is maximized by $\theta=1/2$ and $a=8$, giving exactly $1/128$. Thus the threshold choices are optimal within the template using Markov's inequality, one threshold, and Pinsker's inequality. Proposition~\ref{prop:two-count-lower} keeps this argument because it proves the rate for every $0<r<1$. The large constant improvement in Corollary~\ref{cor:smallr-lower} comes from changing the experiment: it uses many counts simultaneously.

\paragraph{zCDP does not make the released count sub-Gaussian.}
The sub-Gaussian random variable in zCDP is the privacy loss, not the numerical output \citep{BunSteinke2016}. A finite-support output distribution can have finite R\'enyi divergence at every order as long as neighboring laws have common support and bounded likelihood ratios. Therefore one cannot simply replace Markov's inequality by a zCDP tail bound for $|Y-n|$. The useful coupling between privacy and utility is informational, as in Theorem~\ref{thm:manycount}.

\paragraph{The remaining factor is below three.}
For small $r$, the shifted log-Gaussian mechanism approaches the upper constant $1/\pi\approx0.31831$, while Corollary~\ref{cor:smallr-lower} gives $\pi/(4e^2)\approx0.10629$. The ratio is about $2.995$. After invoking zCDP's quadratic group-privacy theorem, the information-radius step retains only the resulting KL endpoint bound. A sharper lower bound may need to exploit the higher R\'enyi orders more directly.

\paragraph{The lognormal right tail is not the leading-order cost.}
For the asymptotically optimized choice $r_0/r\to1$ and $c\to0$, small $r$ gives $\sigma=r\sqrt{\pi/2}\,(1+o(1))$ and
\[
  e^{\sigma Z-\sigma^2}-1
  =\sigma Z+O_{L^1}(\sigma^2).
\]
So, to first order, the shifted log-Gaussian mechanism is simply additive Gaussian noise in count space. The skew appears only in higher-order terms. This suggests that merely correcting the visible right tail will not change the leading constant. A genuine improvement, if one exists, should change the local noise experiment. Recent work on exact zCDP characterizations of non-Gaussian mechanisms and direct optimization of noise laws under R\'enyi objectives provides tools for exploring that direction \citep{HarrisonManurangsi2025,GilaniEtAl2025}.

\paragraph{The utility criterion is expected absolute error.}
The open problem asks for an expectation, and the mechanism is calibrated exactly for that objective. High-probability relative error, squared error, bias, and confidence intervals are different design criteria. The drift and the constants would change.

\paragraph{The groups are fixed and disjoint.}
The proof uses the setting in the original problem: adding or removing one record changes one histogram coordinate. Overlapping or data-dependent groups require a separate sensitivity analysis.

\paragraph{The mechanism is easy to compose.}
Because the guarantee is zCDP, repeated mechanisms compose by adding their edge budgets pointwise: if release $t$ has budget $v_t$, then $v_{\mathrm{tot}}(n)=\sum_t v_t(n)$. The same statement holds for adaptive releases when each conditional mechanism satisfies its claimed edgewise bound uniformly over previous transcripts. This is one practical advantage of obtaining pure all-orders zCDP rather than a bound that is valid only up to a count-dependent R\'enyi order.

\paragraph{The mechanism may be biased.}
The drift $-\sigma^2$ is chosen to minimize expected absolute error, not to make the estimate unbiased. For a counting release, bias can be reported or corrected only at the cost of changing the absolute-error calculation. The theorem makes no unbiasedness claim.

\section{Conclusion}

The original strict relative-error condition is impossible at count zero. After making that boundary tolerance explicit, the larger-groups question has a sharp answer at the level of asymptotic rate.

Fix $0<r<1$. Any mechanism satisfying the positive-count utility requirement and count-dependent zCDP has $v(n)=\Omega_r(n^{-2})$. The shifted log-Gaussian mechanism matches this bound with $v(n)=\Theta_r(n^{-2})$. The converse is mechanism-independent, so the inverse-square rate is not an artifact of the construction or of the zero-count repair. At $r=1$, a data-independent release gives zero privacy loss.

The upper construction is a count-local specialization of the transformation framework of \citet{FinleyEtAl2026}. Their framework supplies the shifted transform and the exact transformed-query gap. We choose the inverse drift for expected absolute error, retain the count-specific gap, and verify bidirectional all-orders zCDP after postprocessing. The equal-variance log representation avoids the obstruction faced by the natural unequal-variance Gaussian, whose reverse divergence is infinite at high R\'enyi orders.

The many-count argument narrows the remaining coefficient gap to
\[
  \frac{\pi}{4e^2}\le C_\star\le\frac1\pi,
\]
and these bounds differ by less than a factor of three. Determining the exact value of $C_\star$ is the main question left open.

\paragraph{Acknowledgments.}
This note was prompted by the open problem of David Pujol and Damien Desfontaines. The author is grateful for the transformation-mechanism framework developed by Finley et al.\ and for Achraf Azize's public 2023 calculations. The many-count lower bound grew out of asking where a two-count proof loses information.

\begingroup
\raggedright
\bibliographystyle{plainnat}
\bibliography{references}

@misc{PujolDesfontaines2023,
  author       = {David Pujol and Damien Desfontaines},
  title        = {Open problem - Better privacy guarantees for larger groups},
  howpublished = {\url{https://differentialprivacy.org/open-problem-better-privacy-guarantees-for-larger-groups/}},
  year         = {2023},
  month        = jun,
  note         = {Accessed July 14, 2026}
}

@misc{Azize2023,
  author       = {Achraf Azize},
  title        = {Open problem - Better privacy guarantees for larger groups},
  howpublished = {Technical note, \url{https://achraf-azize.github.io/Open_problem_Better_privacy_guarantees_for_larger_groups.pdf}},
  year         = {2023},
  month        = jul,
  note         = {Accessed July 14, 2026}
}

@inproceedings{LeNyPappas2013,
  author    = {{Le Ny}, J\'er\^ome and George J. Pappas},
  title     = {Privacy-Preserving Release of Aggregate Dynamic Models},
  booktitle = {Proceedings of the 2nd ACM International Conference on High Confidence Networked Systems},
  series    = {HiCoNS '13},
  pages     = {49--56},
  publisher = {Association for Computing Machinery},
  address   = {New York, NY, USA},
  year      = {2013},
  doi       = {10.1145/2461446.2461454}
}

@inproceedings{XiaoEtAl2011,
  author    = {Xiaokui Xiao and Gabriel Bender and Michael Hay and Johannes Gehrke},
  title     = {{iReduct}: Differential Privacy with Reduced Relative Errors},
  booktitle = {Proceedings of the 2011 ACM SIGMOD International Conference on Management of Data},
  pages     = {229--240},
  publisher = {Association for Computing Machinery},
  year      = {2011},
  doi       = {10.1145/1989323.1989348}
}

@inproceedings{BunSteinke2016,
  author    = {Mark Bun and Thomas Steinke},
  title     = {Concentrated Differential Privacy: Simplifications, Extensions, and Lower Bounds},
  booktitle = {Theory of Cryptography},
  editor    = {Martin Hirt and Adam Smith},
  series    = {Lecture Notes in Computer Science},
  volume    = {9985},
  pages     = {635--658},
  publisher = {Springer},
  year      = {2016},
  doi       = {10.1007/978-3-662-53641-4_24},
  eprint    = {1605.02065},
  archivePrefix = {arXiv}
}

@inproceedings{LuiPass2015,
  author    = {Edward Lui and Rafael Pass},
  title     = {Outlier Privacy},
  booktitle = {Theory of Cryptography},
  editor    = {Yevgeniy Dodis and Jesper Buus Nielsen},
  series    = {Lecture Notes in Computer Science},
  volume    = {9015},
  pages     = {277--305},
  publisher = {Springer},
  year      = {2015},
  doi       = {10.1007/978-3-662-46497-7_11}
}

@inproceedings{HendrikxEtAl2024,
  author    = {Hadrien Hendrikx and Paul Mangold and Aur\'elien Bellet},
  title     = {The Relative Gaussian Mechanism and its Application to Private Gradient Descent},
  booktitle = {Proceedings of the 27th International Conference on Artificial Intelligence and Statistics},
  editor    = {Sanjoy Dasgupta and Stephan Mandt and Yingzhen Li},
  series    = {Proceedings of Machine Learning Research},
  volume    = {238},
  pages     = {3079--3087},
  publisher = {PMLR},
  year      = {2024},
  url       = {https://proceedings.mlr.press/v238/hendrikx24a.html}
}

@article{BiswasEtAl2024,
  author  = {Ari Biswas and Graham Cormode and Yaron Kanza and Divesh Srivastava and Zhengyi Zhou},
  title   = {Differentially Private Hierarchical Heavy Hitters},
  journal = {Proceedings of the ACM on Management of Data},
  volume  = {2},
  number  = {5},
  pages   = {208:1--208:25},
  year    = {2024},
  month   = nov,
  doi     = {10.1145/3695826}
}

@article{SeemanEtAl2024,
  author  = {Jeremy Seeman and William Sexton and David Pujol and Ashwin Machanavajjhala},
  title   = {Privately Answering Queries on Skewed Data via Per-Record Differential Privacy},
  journal = {Proceedings of the VLDB Endowment},
  volume  = {17},
  number  = {11},
  pages   = {3138--3150},
  year    = {2024},
  doi     = {10.14778/3681954.3681989}
}

@article{FinleyEtAl2026,
  author  = {Brian Finley and Anthony M. Caruso and Justin C. Doty and Ashwin Machanavajjhala and Mikaela R. Meyer and David Pujol and William Sexton and Zachary Terner},
  title   = {Slowly Scaling Per-Record Differential Privacy},
  journal = {Journal of Privacy and Confidentiality},
  volume  = {16},
  number  = {1},
  year    = {2026},
  month   = apr,
  doi     = {10.29012/jpc.992}
}

@book{CoverThomas2006,
  author    = {Thomas M. Cover and Joy A. Thomas},
  title     = {Elements of Information Theory},
  edition   = {2nd},
  publisher = {John Wiley \& Sons},
  address   = {Hoboken, NJ},
  year      = {2006},
  doi       = {10.1002/047174882X}
}

@article{VanErvenHarremoes2014,
  author  = {Tim van Erven and Peter Harremo{\o}s},
  title   = {{R}\'enyi Divergence and {K}ullback--{L}eibler Divergence},
  journal = {{IEEE} Transactions on Information Theory},
  volume  = {60},
  number  = {7},
  pages   = {3797--3820},
  year    = {2014},
  month   = jul,
  doi     = {10.1109/TIT.2014.2320500}
}

@inproceedings{GilaniEtAl2025,
  author    = {Atefeh Gilani and Juan Felipe Gomez and Shahab Asoodeh and Flavio P. Calmon and Oliver Kosut and Lalitha Sankar},
  title     = {Optimizing Noise Distributions for Differential Privacy},
  booktitle = {Proceedings of the 42nd International Conference on Machine Learning},
  series    = {Proceedings of Machine Learning Research},
  volume    = {267},
  pages     = {19505--19522},
  publisher = {PMLR},
  year      = {2025},
  url       = {https://proceedings.mlr.press/v267/gilani25a.html}
}

@misc{HarrisonManurangsi2025,
  author        = {Charlie Harrison and Pasin Manurangsi},
  title         = {Exact {zCDP} Characterizations for Fundamental Differentially Private Mechanisms},
  year          = {2025},
  month         = oct,
  eprint        = {2510.25746},
  archivePrefix = {arXiv},
  primaryClass  = {cs.CR},
  doi           = {10.48550/arXiv.2510.25746}
}
\endgroup

\end{document}